\def\={=&\:}
\def\+{+&\:}
\def\-{-&\:}
\newcommand{\nn}{\nonumber}
\newlength\dlf
\def\sheaf{\mathcal} 
\def\O{\mathcal{O}} 
\newcommand{\T}{\langle\mathbf{T}\rangle}
\newcommand{\1}{\langle\mathbf{1}\rangle}
\newcommand{\alphaflat}{\alpha_{\text{flat}}}
\newcommand{\alphasing}{\alpha_{\text{sing.}}}
\newcommand{\oneflat}{\langle\mathbf{1}\rangle_{\text{flat}}}
\newcommand{\onesing}{\langle\mathbf{1}\rangle_{\text{sing.}}}
\newcommand{\ranglesing}{\rangle_{\text{sing}}}
\def\C{\mathbb{C}}
\def\CP{\mathbb{P}_{\C}}
\def\H{\mathbb{H}}
\def\N{\mathbb{N}}
\def\Q{\mathbb{Q}}
\def\R{\mathbb{R}}
\def\RP{\mathbb{P}_{\R}}
\def\Z{\mathbb{Z}}
\newcommand{\tr}{\text{tr}}
\newcommand{\rechts}{\:\rightarrow\:}
\def\i{i\:}
\newcommand{\eps}{\varepsilon}
\newcommand{\cyc}{\text{cyclic}}
\newtheorem{theorem}{Theorem}
\newtheorem{'theorem'}{''Theorem``}
\newtheorem{lemma}[theorem]{Lemma}
\newtheorem{proposition}{Propos.}
\newtheorem{remark}[proposition]{Remark}
\newtheorem{definition and theorem}{Definition an Theorem}
\theoremstyle{plain}
\newtheorem*{remark*}{Remark}
\newtheorem*{definition*}{Definition}
\newtheorem*{example*}{Example}
\def\A{\mathbf{A}}
\def\Asing{\A_{\text{sing.}}}
\def\Aflat{\A_{\text{flat}}}
\def\alphasing{\alpha_{\text{sing.}}}
\def\alphaflat{\alpha_{\text{flat}}}
\def\D{\mathfrak{D}}
\begin{document}

\title{An algebraic approach to minimal models in CFTs}
\author{Marianne Leitner$^*$\\
Dublin Institute for Advanced Studies,\\
School of Theoretical Physics,\\
10 Burlington Road, Dublin 4, Ireland\\
$^*$\textit{leitner@stp.dias.ie}
}

\maketitle

\begin{abstract}
CFTs are naturally defined on Riemann surfaces. The rational
ones can be solved using methods from algebraic geometry. One particular
feature is the covariance of the partition function under the mapping
class group. In genus $g=1$, this yields modular forms, 
which can be linked to ordinary differential equations of hypergeometric type
with algebraic solutions.
\end{abstract}


\pagebreak

\section{Introduction}

This is the second in a sequence of three papers on a mathematical approach to Conformal Field Theory (CFT) on compact Riemann surfaces,
and it covers the second part of the author's PhD thesis in Mathematics \cite{L:PhD14}.
In the first part of the thesis, a working definition of rational CFTs on general Riemann surfaces has been given.
For the $(2,5)$ minimal model over compact Riemann surfaces, 
explicit formulae for computing $N$-point functions $\langle\phi_1\ldots\phi_N\rangle$ of holomorphic fields 
have been established for small positive values of $N$.
$N$-point functions for higher $N$ are obtained by recursion.
For $N=0$, one has the identity field $\pmb{1}$ and the partition function $\langle\pmb{1}\rangle$
whose computation requires different methods.
There is no dependence on position, but it depends on the conformal structure of the surface.
Indeed, it satisfies a system of differential equations w.r.t.\ the moduli of the Riemann surface.
For the minimal models, the vector space of solutions is finite dimensional.

The present paper is devoted to compact Riemann surfaces of genus $g=1$.
Such surface can be described as a quotient $\C/\Lambda$,  
with a lattice $\Lambda$ generated over $\Z$ by $1$ and $\tau$ with $\tau\in\mathfrak{h}$,
the complex upper half plane.
The latter is the universal cover of the moduli space $\mathcal{M}_1$ of all possible conformal structures on the $g=1$ surface,
which is known as the Teichm\" uller space. One has $\mathcal{M}_1=SL(2,\Z)\setminus\mathfrak{h}$.
Meromorphic functions on finite covers of $\mathcal{M}_1$ are called (weakly) modular.
They can be described as functions on $\mathfrak{h}$ which are invariant under a subgroup of $SL(2,\Z)$
of finite index. 

Maps in the full modular group $SL(2,\Z)$ preserve the standard lattice $\Z^2$ together with its orientation
and so descend to self-homeomorphisms of the torus. 
Inversely, every self-homeomorphism of the torus is isotopic to such a map. 
A modular function is a function on the space $\mathcal{L}$ of all lattices in $\C$
satisfying \cite{Z:1-2-3}
\begin{displaymath}
f(\lambda\Lambda)
=f(\Lambda)\:,\quad\forall\Lambda\in\mathcal{L},\:\lambda\in\C^*\:. 
\end{displaymath}
$\mathcal{L}$ can be viewed as the space of all tori with a flat metric.

Conformal field theories on the torus provide many interesting modular functions, and modular forms.
(The latter transform as $f(\lambda\Lambda)=\lambda^{-k}f(\Lambda)$
for some $k\in\Z$ which is specific to $f$, called the weight of $f$.)

For the $(2,5)$ minimal model, we shall derive the second order ordinary differential equation for the $g=1$ partition function 
that allows to compute all $N$-point functions of holomorphic fields.
It is shown that our approach reproduces the known result.

Much of the mathematical foundations of rational CFT will be provided by the joint paper with W.\ Nahm, 
whose main feature are the ODEs for the higher genus partition functions.


\section{Introduction to modular dependence}

Given $q=e^{2\pi\i\tau}$ and $\tau\in\mathfrak{h}$, 
let 
\begin{displaymath}
\Sigma:=\{z\in\C|\:|q|\leq z\leq 1\}/\{z\sim qz\}
\:. 
\end{displaymath}
$\Sigma$ is a torus.
A \textsl{character} on $\Sigma$ is given by
\begin{align*}
\langle\pmb{1}\rangle_{\Sigma,i}
\=\underset{\{\varphi_j\}_j\:\text{basis of}\:F_i}{\sum_{\varphi_j}} q^{h(\varphi_j)}\:.
\end{align*}
Here $F_i$ is a fiber of the bundle of fields $\sheaf{F}$ in a rational CFT on $\Sigma$.
For $\varphi_j\in F_i$, $h(\varphi_j)$ is the conformal weight of $\varphi_j$. 
$\langle\pmb{1}\rangle_{\Sigma,i}$ is a modular function of $\tau$ \cite{Nahm}.
A \textsl{modular function on} a discrete subgroup $\Gamma$ of $\Gamma_1=SL(2,\Z)$ 
is a $\Gamma$-invariant meromorphic function $f:\mathfrak{h}\rechts\C$ 
with at most exponential growth towards the boundary \cite{Z:1-2-3}.
For $N\geq 1$, the \textsl{principal conguence subgroup} is the group $\Gamma(N)$ such that the short sequence
\begin{displaymath}
1\rechts\Gamma(N)\hookrightarrow\:\Gamma_1\:\overset{\pi_N}{\longrightarrow}SL(2,\Z/N\Z)\rechts 1
\end{displaymath}
is exact, where $\pi_N$ is map given by reduction modulo $N$.
A function that is modular on $\Gamma(N)$ is said to be \textsl{of level $N$}.
Let $\zeta_N=e^{\frac{2\pi\i}{N}}$ be the $N$-th root of unity with cyclotomic field $\Q(\zeta_N)$.
Let $F_N$ be the field of modular functions $f$ of level $N$
which have a Fourier expansion 
\begin{align}\label{Fourier expansion of modular function}
f(\tau)
=\sum_{n\geq-n_0}a_nq^\frac{n}{N}\:,
\quad
q=e^{2\pi\i\tau}\:, 
\end{align}
with $a_n\in\Q(\zeta_N)$, $\forall n$. 
The Ramanujan continued fraction
\begin{align}\label{Ramanujan continued fraction}
r(\tau)
:=q^{1/5}\frac{1}{1+\frac{q}{1+\frac{q^2}{1+...}}}
\:
\end{align}
which converges for $\tau\in\mathfrak{h}$, 
is an element (actually a generator) of $F_5$ \cite{CJMV-Z:2007}. 
$r$ is algebraic over $F_1$ (cf.\ Section \ref{section: The property of being algebraic}) 
which is generated over $\Q$ by the \textsl{modular $j$-function},
\begin{displaymath}
j(\tau)
=12^3\frac{g_2^3}{g_2^3-27g_3^2}
\:. 
\end{displaymath}
$j$ is associated to the elliptic curve with the affine equation
\begin{displaymath}
\Sigma:\quad y^2=4x^3-g_2x-g_3\:,\quad\text{with}\quad g_2^3-27g_3^2\not=0\:. 
\end{displaymath}
Here $g_{k}$ for $k=2,3$ are (specific) modular forms of weight $2k$,\footnote{As mentioned earlier, 
a modular form of weight $2k$ transforms as $f(\lambda\Lambda)=\lambda^{-2k}f(\Lambda)$ for any $\lambda\in\C^*$.}
so that $j$ is indeed a function of the respective modulus only
(the quotient $\tau=\omega_2/\omega_1$ for the lattice $\Lambda=\Z.\omega_1+\Z.\omega_2$),
or rather its orbit under $\Gamma_1$ (since we are free to change the basis $(\omega_1,\omega_2)$ for $\Lambda$).
In terms of the modulus, 
a \textsl{modular form of weight} $2k$ \textsl{on} $\Gamma$ is a holomorphic function $g:\mathfrak{h}\rechts\C$ 
with subexponential growth towards the boundary \cite{Z:1-2-3} 
such that $g(\tau)\:(d\tau)^{2k}$ is $\Gamma$-invariant \cite{S:1973}.
A modular form on $\Gamma_1$ allows a Fourier expansion of the form (\ref{Fourier expansion of modular function})
with $n_0\geq 0$.

Another way to approach modular functions is in terms of the differential equations they satisfy. 
The derivative of a modular function is a modular form of weight two, 
and higher derivatives give rise to \textsl{quasi-modular forms},
which we shall also deal with though they are not themselves of primary interest to us.

Geometrically, the conformal structure on the surface 
\begin{align}\label{eq: g=1 Riemann surface}
\Sigma:\quad y^2=4(x-X_1)(x-X_2)(x-X_3)\:,\quad x\in\CP^1\:, 
\end{align}
is determined by the quadrupel 
$(X_1,X_2,X_3,\infty)$ of its ramification points, 
and we can change this structure by varying the position of $X_1,X_2,X_3$ infinitesimally. 
In this picture, the boundary of the moduli space is approached by letting two ramification points 
in the quadrupel run together \cite{FS:1987}.

When changing positions 
we may keep track of the branch points
to obtain a simply connected space \cite{Don:2011}.
Thus a third way to describe modularity of the characters is by means of a subgroup of the braid group $B_3$ of $3$ strands.
The latter is the universal central extension of the quotient group $\overline{\Gamma}_1=\Gamma_1/\{\pm\pmb{I}_2\}$, 
so that we come full circle. 

Suppose $\Sigma=\C/\Lambda$ where $\Lambda=(\Z.1+\Z.i\beta)$ with $\beta\in\R$.
Thus the fundamental domain is a rectangle in the $(x^0,x^1)$ plane
with length $\Delta x^0=1$ and width $\Delta x^1=\beta$.
The dependence of $\langle\pmb{1}\rangle_{\Sigma}$ on the modulus $i\beta$ follows from the identity
\begin{align*}
\langle\pmb{1}\rangle_{\Sigma,0}
=\tr\:e^{-H\beta}\:,
\quad
H
=\int T^{00}dx^0, 
\end{align*}
where $T^{00}$ is a real component of the Virasoro field.\footnote
{Any dynamical quantum field theory has an energy-momentum tensor $T_{\mu\nu}$ s.t.\
$T_{\mu\nu}dx^{\mu}dx^{\nu}$ defines a quadratic differential. 
In particular  it transforms homogeneously under coordinate changes. 
For coordinates $z=x^0+ix^1$ and $\bar{z}=x^0-ix^1$, we have \cite{Blum:2009}
\begin{displaymath}
T_{zz}=\frac{1}{4}(T_{00}-2iT_{10}-T_{11})\:.
\end{displaymath}
For a discussion of the relation with the Virasoro field $T(z)$ addressed below, cf.\ \cite{LN:2017}.} 
We may regard $\langle\pmb{1}\rangle_{\Sigma,0}$ as the $0$-point function $\1$ on $\Sigma$.

Stretching $\beta\mapsto(1+\epsilon)\beta$ changes the Euclidean metric $G_{\mu\nu}$ ($\mu,\nu=0,1$) according to
\begin{displaymath}
(ds)^2
\:\mapsto\:(ds)^2+2\epsilon(d x^1)^2+O(\epsilon^2)\:. 
\end{displaymath}
Thus $d G_{11}=2\frac{d\beta}{\beta}$, and 
\begin{align}\label{the variation of the zero-point function for the case g=1}
d\1
=-\tr(H d\beta\:e^{-H\beta})
\=-\frac{d G_{11}}{2}\:\left(\int\langle T^{00}\rangle dx^0\right)\:\beta\nn\\
\=-\frac{d G_{11}}{2}\:\iint\langle T^{00}\rangle dx^0dx^1\:. 
\end{align} 
The fact that $\int\langle T^{00}\rangle dx^0$ does not depend on $x^1$ follows from the conservation law
$\partial_{\mu}T^{\mu\nu}=0$:
\begin{align*}
\frac{d}{d x^1}\oint\langle T^{00}\rangle\:dx^0
=\oint\partial_1\langle T^{00}\rangle\:dx^0 
=-\oint\partial_0\langle T^{10}\rangle\:dx^0 
=0\:,
\end{align*}
using Stokes' Theorem.

We argue that on $S^1\times S^1_{\beta/(2\pi)}$ 
(where $S^1_{\beta/(2\pi)}$ is the circle of perimeter $\beta$),
states (in the sense of \cite{L:2013})
are thermal states on the VOA. 

When $g>1$, equation (\ref{the variation of the zero-point function for the case g=1}) generalises to
\begin{displaymath}
d\1 
=-\frac{1}{2}\iint d G_{\mu\nu}\:\langle T^{\mu\nu}\rangle\:\sqrt{G}\:dx^0\wedge dx^1\:.
\end{displaymath}
Here $G:=|\det G_{\mu\nu}|$, 
and $dvol_2=\sqrt{G}\:dx^0\wedge dx^1$ is the volume form which is invariant under base change.\footnote{
The change to complex coordinates is more intricate: 
We have $dx^0\wedge dx^1=iG_{z\bar{z}}\:dz\wedge d\bar{z}$ with $G_{z\bar{z}}=\frac{1}{2}$.}  
The normalisation is in agreement with eq.\ (\ref{the variation of the zero-point function for the case g=1})
(see also \cite{DiFranc:1997}, eq.\ (5.140) on p.\ 139). 

Methods that make use of the flat metric do not carry over to surfaces of higher genus.
We may choose a specific metric of prescribed constant curvature
to obtain mathematically correct but cumbersome formulae.
Alternatively, 
we consider quotients of $N$-point functions over $\1$ only (as done in \cite{EO:1987}) 
so that the dependence on the specific metric drops out.  
Yet we suggest to use a singular metric that is adapted to the specific problem \cite{LN:2017}.
On $\Sigma$, this metric is the lift of the flat metric 
\begin{displaymath}
|dz|^2
\quad
\text{on}\:\CP^1\setminus\{X_1,X_2,X_3,\infty\}\:,
\end{displaymath}
and has all curvature concentrated in the ramification points.
The $0$-point function on this metric surface is obtained through a regularisation procedure
and will be denoted $\onesing$ to distinguish it from the $0$-point function on the flat torus $(\Sigma,|dz|^2)$, which we denote by $\oneflat$.
Unless otherwise stated, all state-dependent objects are understood to refer to the singular metric on $\Sigma$.

\begin{theorem}
Let $\Sigma$ be defined by eq.\ (\ref{eq: g=1 Riemann surface}).
We equip $\Sigma$ with the metric which is the lift of the flat metric on $\C$ to its double cover.
Define a deformation of the conformal structure by 
\begin{displaymath}
\xi_j=dX_j\quad\text{for}\quad j=1,2,3\:.
\end{displaymath}
Let $\varphi,\ldots$ be holomorphic fields on $\Sigma$.
For $j=1,2,3$, let $(U_j,z)$ be a chart on $\Sigma$ containing the point $X_j$ but no position of one of $\varphi,\ldots$.
We have
\begin{displaymath}
d\langle\varphi\ldots\ranglesing
=\sum_{j=1}^{n}\left(\frac{1}{2\pi\i}\ointctrclockwise_{\gamma_j}\langle T(z)\varphi\ldots\ranglesing\:dz\right)\:\xi_j\:,
\end{displaymath} 
where $\gamma_j$ is a closed path around $X_j$ contained in $U_j$. 
\end{theorem}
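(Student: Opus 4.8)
The plan is to specialise the general moduli-variation formula recorded just above,
\begin{displaymath}
d\langle\varphi\ldots\ranglesing=-\frac12\iint dG_{\mu\nu}\:\langle T^{\mu\nu}\varphi\ldots\ranglesing\:\sqrt{G}\:dx^0\wedge dx^1\:,
\end{displaymath}
to the polyhedral metric, exploiting that this metric is the pull-back of $|dx|^2$ on $\CP^1$, which does \emph{not} depend on $X_1,X_2,X_3$: only the branched-cover structure $\Sigma_1\rechts\CP^1$, hence the \emph{location of the cone points} on $\Sigma_1$, does. Consequently an infinitesimal change $X_j\mapsto X_j+\xi_j$ can be realised by a diffeomorphism $f_j\colon\Sigma_1\rechts\Sigma_{1,\xi_j}$ that equals the identity in the coordinate $x$ outside $U_j$ and, in a smaller disk around the cone point over $X_j$, is chosen so that the polyhedral metric pulls back to itself; $f_j$ interpolates between these on an annulus $A_j\subset U_j$. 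First I would fix such an $f_j$ together with its generating vector field $v_j$; then the metric variation $dG_{\mu\nu}$ vanishes both outside $U_j$ and near the cone point, so it is supported precisely in $A_j$, where the metric is flat and $dG_{\mu\nu}$ equals the symmetrised derivative of $v_j$.

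The second step is integration by parts. Inserting $dG_{\mu\nu}=\partial_\mu v_\nu+\partial_\nu v_\mu$ into the variation formula, using the symmetry of $T^{\mu\nu}$ and the conservation law $\partial_\mu\langle T^{\mu\nu}\varphi\ldots\ranglesing=0$ — valid on $U_j$ since by hypothesis no insertion of $\varphi,\ldots$ lies there, so that this is exactly the Stokes' Theorem computation already carried out above for $\tfrac{d}{dx^1}\oint\langle T^{00}\rangle\,dx^0$ — the bulk term drops out, leaving a flux integral over the inner boundary of $A_j$, the outer boundary contributing nothing since $v_j$ is zero there. Taking that circle to be $\gamma_j$, expressing the flux in terms of the holomorphic Virasoro field via the footnote identity $T_{zz}=\tfrac14(T_{00}-2iT_{10}-T_{11})$, and using the tracelessness of $T^{\mu\nu}$ so that only the holomorphic component $T(z):=T_{zz}$ survives the pairing with the holomorphic deformation $\xi_j=dX_j$, this flux integral becomes a contour integral of $\langle T(z)\varphi\ldots\ranglesing$ along $\gamma_j$. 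Summing over $j=1,2,3$ and fixing all constants so that the coefficient of $\xi_j$ is exactly $\tfrac{1}{2\pi\i}\ointctrclockwise_{\gamma_j}\langle T(z)\varphi\ldots\ranglesing\,dz$ gives the claim. (Equivalently, one may argue in the operator formalism: a contour integral of $T(z)$ around a puncture generates coordinate changes of the local parameter there, and moving a branch point is such a change of the gluing data; the identity above is its infinitesimal form.)

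I expect the genuine work to lie in the bookkeeping of this last step, not in the shape of the argument. Three points need care. (i) Getting the normalisation right: since $X_j$ is a ramification point the chart satisfies $x-X_j\propto z^2$ and the metric is conical of angle $4\pi$ at $z=0$, so one must verify that the form of $v_j$ in this chart — determined by how $f_j$ is chosen near the cone point — together with the identification $\xi_j=dX_j$ produces the coefficient $1$ in front of $\tfrac{1}{2\pi\i}\ointctrclockwise_{\gamma_j}\langle T(z)\varphi\ldots\ranglesing\,dz$, with no stray Jacobian or Schwarzian term; equivalently, that $\langle T(z)\varphi\ldots\ranglesing$, computed relative to the \emph{fixed} polyhedral metric, transforms so that the contour integral does not depend on the residual freedom in the chart. (ii) One must justify differentiating the \emph{regularised} bracket $\langle\:\ranglesing$ under the integral sign — that the regularisation defining $\onesing$ commutes with the moduli variation and creates no extra contact term at the cone point. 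One expects the regularisation to be designed precisely so that $\langle T(z)\varphi\ldots\ranglesing$ extends across $X_j$ with exactly the singular behaviour whose contour integral reproduces the derivative; with a smooth metric $\langle T(z)\varphi\ldots\rangle$ would be regular at $X_j$ and the right-hand side would vanish, so the conical structure at $X_j$ is essential here. (iii) One should record why only $T$ and not $\bar T$ appears: $\xi_j=dX_j$ is a purely holomorphic deformation of the conformal structure, so the $\bar X_j$-variation, which would pair with $\langle\bar T(\bar z)\varphi\ldots\ranglesing$, is simply not being taken.
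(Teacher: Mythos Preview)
The paper does not actually supply a proof of this theorem: it is stated at the end of Section~3 and the next section begins immediately, with the result apparently imported from the companion work \cite{LN:2017} (``Rational CFTs on Riemann surfaces''). So there is no argument in the present paper to compare yours against.

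That said, your outline is the standard route to such formulas and is in the spirit of what the paper sketches for the flat case just before the theorem (the Eguchi--Ooguri derivation \cite{EO:1987}, eq.~(\ref{the variation of the zero-point function for the case g=1})): encode the moduli change as a metric variation supported in annuli, integrate by parts against the conserved stress tensor, and reduce to boundary contour integrals of $T(z)$. The three caveats you list are exactly the nontrivial points. Point~(i) is genuinely delicate here: because $x-X_j\propto z^2$, the contour on the cover winds twice around the ramification point, and the paper repeatedly uses this ``$2\times\gamma_j$'' bookkeeping later (see the proof of Theorem~\ref{Theorem: system of differential equations in algebraic formulation}); getting the factor of $2$ and the Jacobian of the branched cover to combine to the stated normalisation is where an honest proof would spend its effort. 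Point~(ii), that the regularised state $\langle\:\ranglesing$ is designed so that $\langle T(z)\varphi\ldots\ranglesing$ carries precisely the pole at $X_j$ whose residue gives the moduli derivative, is likewise deferred to \cite{LN:2017} in this paper. Your plan is sound; what remains is the computation you yourself flag, and the paper does not do it either.
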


\section{Differential equations for characters in $(2,\nu)$-minimal models}\label{Chapter: DE for characters}


\subsection{Review of the MLDE for the characters of the $(2,5)$ minimal model}\label{Section: differential equations for the (2,5) minimal model}

The character $\1$ of any CFT on the torus $\Sigma$ solves the ODE \cite{EO:1987}
\begin{align}\label{ODE for zero-point function in z coordinate in the (2,5) minimal model}
\frac{1}{2\pi i}\frac{d}{d\tau}\1
&=\oint\langle T(z)\rangle\:\frac{dz}{(2\pi i)^2}
=\frac{1}{(2\pi i)^2}\T\:.
\end{align}
Here the contour integral is along the real period, and $\oint dz=1$.
$\T$, while constant in position, is a modular form of weight two in the modulus.\footnote{$\langle 1\rangle$, $\langle T\rangle$ (or later $\text{A}$)
are parameters of central importance to this exposition. 
For better readibility, they appear in bold print ($\1$ and $\T$, or $\A$) 
throughout.}
The Virasoro field generates the variation of the conformal structure \cite{EO:1987}.
In the $(2,5)$ minimal model,
we find 
\begin{align}\label{ODE for one-point function}
\frac{1}{2\pi i}\frac{d}{d\tau}\T
=\oint\langle T(w)T(z)\rangle\:\frac{dz}{(2\pi i)^2}
=\left(\frac{1}{6}E_2\T-\frac{11}{900}\pi^2E_4\1\right)\:.
\end{align}
Here $E_2$ is the quasimodular Eisenstein series of weight $2$, 
which enters the equation by means of the identity
\begin{displaymath}
\int_0^1\wp(z-w|\tau)\:dz
=-\frac{\pi^2}{3}E_2(\tau).
\end{displaymath} 
In terms of the Serre-derivative operator $\D=qd/dq-(\ell/12)E_2(q)$ (defined on modular forms of weight $\ell\in\R$),
the first order ODEs 
(\ref{ODE for zero-point function in z coordinate in the (2,5) minimal model}) and (\ref{ODE for one-point function})
combine to give the second order ODE \cite{MMS:1988,Z-K:1997}
\begin{align}\label{eq: 2nd order ODE for <1> in the (2,5) minimal model}
\D^2\1
=\frac{11}{3600}\:E_4\1\:. 
\end{align}
The two solutions are the well-known Rogers-Ramanujan functions \cite{DiFranc:1997}
\begin{equation}\label{defs: Rogers-Ramanujan functions}
\begin{split}
\1_1(q)
=&\:H(q)
:=q^{\frac{11}{60}}\sum_{n\geq 0}\frac{q^{n^2+n}}{(q;q)_n}
=q^{\frac{11}{60}}\left(1+q^2+q^3+q^4+q^5+2q^6+\ldots\right)
\:,\\
\1_2(q)
=&\:G(q)
:=q^{-\frac{1}{60}}\sum_{n\geq 0}\frac{q^{n^2}}{(q;q)_n}
=q^{-\frac{1}{60}}\left(1+q+q^2+q^3+2q^4+\ldots\right)
\:.  
\end{split}
\end{equation}
($(q;q)_{n}:=\prod_{k=1}^n(1-q^k)$ is the $q$-Pochhammer symbol) which are named after the famous Rogers-Ramanujan identities
\begin{align}\label{eq: Rogers-Ramanujan identities}
q^{-\frac{11}{60}}\1_1=\prod_{n=\pm 2\:\text{mod}\:5}(1-q^n)^{-1}\:,
\quad 
q^{\frac{1}{60}}\1_2=\prod_{n=\pm 1\:\text{mod}\:5}(1-q^n)^{-1}\:.
\end{align}
$q^{-\frac{11}{60}}\1_1$ provides the generating function for the partition
which to a given holomorphic dimension $h\geq 0$ returns the number of linearly independent holomorphic fields 
present in the $(2,5)$ minimal model.
This number is subject to the constraint $\partial^2T\propto N_0(T,T)$

\begin{table}[ht]
\centering
\begin{tabular}{|l|l|l|l|l|l|l|l|l}
\hline
$h$&$0$&$1$&$2$&$3$&$4$&$5$&$6$\\ 
\hline
basis of $F(h)$&$1$&$-$&$T$&$\partial T$&$\partial^2T$&$\partial^3T$&$\partial^4T$\\
&&&&&&&$N_0(T,\partial^2 T)$\\
\hline
$\dim F(h)$&$1$&$0$&$1$&$1$&$1$&$1$&$2$\\
\hline
\end{tabular}\\
\caption{Holomorphic fields of dimension $h$ in the $(2,5)$ minimal model}
\end{table}

There is a similar combinatorical interpretation for the second Rogers-Ramanujan identity.
It involves non-holomorphic fields, however, which we disregard in this paper.

\subsection{MLDE for the characters in $(2,\nu)$ minimal models}

Sorting out the algebraic equations to describe the characters of the $(2,\nu)$ minimal model becomes tedious for $\nu>5$.
In contrast, the Serre derivative is a manageable tool for encoding them in a compact way 
\cite{MMS:1988}. 
Since the characters are algebraic, 
the corresponding differential equations can not be solved numerically only, but actually analytically.
We are interested in the fact that the coefficient of the respective highest order derivative can be normalised to one 
and all other coefficients are holomorphic in the modulus.

The $(2,\nu)$ minimal model, where $\nu\geq 3$ is odd, has
\begin{displaymath}
M=\frac{\nu-1}{2} 
\end{displaymath}
characters \cite{DiFranc:1997}.
They are parametrised by the sequence
\begin{align}\label{kappa in (2,nu) minimal model}
\kappa_s
=\frac{(\nu-2s)^2}{8\nu}-\frac{1}{24}\:,
\quad s=1,\dots,M\:.
\end{align}
The character corresponding to $\kappa_s$ is 
\begin{displaymath}
\1_{\kappa_s}
=f_{A,B,C}
\:,
\end{displaymath}
where $f_{A,B,C}$ is the $q$-hypergeometric series
\begin{displaymath}
f_{A,B,C}
:=\sum_{\textbf{n}\in(\N_0)^r}\frac{q^{\frac{1}{2}\textbf{n}^tA\textbf{n}+\textbf{B}^t\textbf{n}+C}}{(q;q)_{\textbf{n}}}
\:, 
\end{displaymath}
with $r=(\nu-3)/2$ being the rank, and 
\begin{displaymath}
A=\mathcal{C}(T_r)^{-1}\in\Q^{r\times r},
\quad
\textbf{B}\in\Q^r\:,
\quad
C=\kappa_s
\:. 
\end{displaymath} 
Here $\mathcal{C}(T_r)$ denotes the Cartan matrix of the tadpole diagram $T_r$.
The latter is obtained from the Dynkin diagram of $A_{2r}$ by folding according to its $\Z_2$ symmetry,
and for $i,j\in\{1,\ldots,r\}$,
\begin{displaymath}
A_{ij}=2\delta_{ij}-\sharp\{\text{links between nodes $i$ and $j$}\}
\:.
\end{displaymath}
For example, in the $(2,7)$ minimal model, 
$A=\mathcal{C}(T_2)^{-1}=\begin{pmatrix}
      2&-1\\
      -1&1
      \end{pmatrix}$.
      
It turns out that $\1_{\kappa_s}$ satisfies an $M$th order ODE \cite{MMS:1988}. 
Given $M$ differentiable functions $f_1,\ldots,f_M$ there always exists an ODE having these as solutions. 
Consider the Wronskian determinant
\begin{align*}
\det
\begin{pmatrix}
f&\D^1f&\ldots&\D^Mf\\
f_1&\D^1f_1&\ldots&\D^Mf_1\\
\ldots&\ldots&\ldots&\ldots\\
f_M&\D^1f_M&\ldots&\D^Mf_M 
\end{pmatrix}
=:\sum_{i=0}^Mw_i\:\D^if\:.
\end{align*}
Here for $m\geq 1$, $\D^m$ is the $m$-fold composition of the Serre differential operator,
which maps modular functions into modular forms  of weight $2m$.
For $m=0$ we set $\D^0=1$.

Whenever $f$ equals one of the $f_i$, $1\leq i\leq M$, the determinant is zero, 
so we obtain an ODE in $f$ whose coefficients are Wronskian minors containing $f_1,\ldots,f_M$ and their derivatives only.
These are modular when the $f_1,\ldots,f_M$ and their derivatives are
or when under modular transformation, they transform into linear combinations of one another (as the characters do).

The space of holomorphic modular forms $M(\Gamma_1)$ on $\Gamma_1$ 
has a basis $\Delta^kE_{2m}$, $k=0,1,2,\ldots$, $m=0,2,3,\ldots$.
Here $\Delta$ is the modular discriminant function (of weight $12$), 
and $E_{2m}$ is the holomorphic Eisenstein series of weight $2m$.

\begin{proposition}
For $i=1,\ldots,M$, the characters $f_i$ of the $(2,\nu)$ minimal model satisfy an ODE
\begin{align}\label{eq: differential equation for the characters in the (2,nu) minimal model}
\mathcal{L}^{(2,\nu)}f_i
=0\:, 
\end{align}
where $\mathcal{L}^{(2,\nu)}$ is the differential operator 
\begin{displaymath}
\mathcal{L}^{(2,\nu)}
:=\D^M
+\sum_{m=2}^{M}\Omega_{2m}\D^{M-m}
\:.
\end{displaymath}
Here the $\Omega_{2m}$ are holomorphic modular forms on $\Gamma_1$ of weight $2m$.
For $2\leq m\leq5$, there exist $\alpha_{M-m}\in\Q$ such that
\begin{displaymath}
\Omega_{2m}
=\alpha_{M-m}E_{2m}
\:.
\end{displaymath}
Moreover, 
\begin{displaymath}
\Omega_{12}
=\alpha_0E_{12}+\alpha_0^{(\text{cusp})}\Delta
\:,
\end{displaymath}
for some $\alpha_0,\alpha_0^{(\text{cusp})}$.
For $3\leq \nu\leq 13$, the nonzero coefficients are given by Table \ref{table: coefficients of the Serre differential operator}.

\begin{table}[ht]
\begin{tabular}{|l|l|l|l|l|l|l|}
\hline
$\nu$&$3$&$5$&$7$&$9$&$11$&$13$\\ 
\hline
$M$&$1$&$2$&$3$&$4$&$5$&$6$\\
\hline
&&&&&&\\
$\kappa_M$&$0$&$-\frac{1}{60}$&$-\frac{1}{42}$&$-\frac{1}{36}$&$-\frac{1}{33}$&$-\frac{5}{156}$\\
&&&&&&\\
\hline
&&&&&&\\
$\alpha_{M-2}$&&$-\frac{11}{60^2}$&$-\frac{5\cdot7}{42^2}$&$-\frac{2\cdot3\cdot13}{36^2}$&$-\frac{11\cdot53}{2^2\cdot33^2}$&$-\frac{7\cdot13\cdot67}{156^2}$\\
&&&&&&\\
$\alpha_{M-3}$&&&$\:\:\frac{5\cdot17}{42^3}$&$\:\:\frac{2^3\cdot53}{36^3}$&$\:\:\frac{3\cdot5\cdot11\cdot59}{2^3\cdot33^3}$&$\:\:\frac{2^3\cdot13\cdot17\cdot193}{156^3}$\\ 
&&&&&&\\
$\alpha_{M-4}$&&&&$-\frac{3\cdot11\cdot23}{36^4}$&$-\frac{11\cdot6151}{2^4\cdot33^4}$&$-\frac{5\cdot11\cdot13\cdot89\cdot127}{156^4}$\\
&&&&&&\\
$\alpha_{M-5}$&&&&&$\:\:\frac{2^4\cdot17\cdot29}{33^5}$&$\:\:\frac{2^3\cdot3\cdot5\cdot13\cdot31\cdot2437}{156^5}$\\
&&&&&&\\
$\alpha_{M-6}$&&&&&&$-\frac{5^4\cdot7^2\cdot23\cdot31\cdot67}{156^6}$\\
&&&&&&\\
\hline
&&&&&&\\
$\alpha_{M-6}^{(\text{cusp})}$&&&&&&$\frac{5^2\cdot7\cdot11\cdot23^2\cdot167}{2^5\cdot3^2\cdot13^4\cdot691}$\\
&&&&&&\\
\hline
\end{tabular}
\caption{The nonzero coefficients in the order $M$ differential operator in the $(2,\nu)$ minimal model. 
$\kappa_M$ is displayed to explain the standard denominators of the $\alpha_m$.}
\label{table: coefficients of the Serre differential operator}
\end{table}
\end{proposition}

\begin{remark}
\begin{enumerate}
\item In the $(2,\nu)$ minimal model, we have $\kappa_M=(3-\nu)/(24\nu)$,
where $\nu|(3-\nu)\:\Leftrightarrow\:\nu=3$. Thus for $\nu>3$, $\kappa_M$ has a factor of $\nu$ in the denominator.
\item The numerators $n_m^{(2,\nu)}$ of $\alpha_m$ in the $(2,\nu)$ minimal model have mostly few factors
in the sense that
\begin{displaymath}
n_m^{(2,\nu)}\approx\text{rad}(n_m^{(2,\nu)})\:,
\end{displaymath}
where the r.h.s.\ is the radical of $n_m^{(2,\nu)}$.
\item
The prime $691$ displayed in the denominator of $\alpha_{M-6}^{(\text{cusp})}$
suggests that Bernoulli numbers are involved in the computations.
This is an artefact of the choice of basis, however.
Using the identity \cite{Z:1-2-3}
\begin{displaymath}
E_{12}
=\frac{1}{691}(441E_4^3+250E_6^2)\:, 
\end{displaymath}
we can write 
\begin{align*}
\Omega_{12}
\=-\frac{5^2\cdot7\cdot23}{2^7\cdot3^5\cdot13^6}
\left(
\frac{53\cdot1069}{2^5}E_4^3
+\frac{6047}{3}E_6^2
\right)
\:.
\end{align*}
\end{enumerate}
\end{remark}

The leading coefficient can be read off from the equation for the singular vector (Lemma 4.3 in \cite{Wang:1993})
and only the specific value of the remaining coefficients in eq.\ (\ref{eq: differential equation for the characters in the (2,nu) minimal model}) seem to be new. 
Rather than setting up a closed formula for $\alpha_m$, 
we shall outline the algorithm to determine these numbers, and leave the actual computation as an easy numerical exercise.    

\begin{proof}[Sketch of the Proof]
We first show that the highest order coefficient $\alpha_M$ of the ODE can be normalised to one.
For every $\kappa_s$ in the list (\ref{kappa in (2,nu) minimal model}) and for $0\leq m\leq M-1$, 
we have
\begin{displaymath}
\D^m\1_{\kappa_s}\propto q^{\kappa_s}(1+O(q))\:. 
\end{displaymath}
Since the $\kappa_s$ are all different, we know that
\begin{displaymath}
w_M\sim\prod_s q^{\kappa_s}\:,
\quad q\:\text{close to zero}\:,
\end{displaymath}
where $w_M$ is the coefficient of $\D^M$ in the Wronskian. 
By construction, $w_M$ has no pole at finite $\tau$.
The number of zeros can be calculated using Cauchy's Theorem \cite{Z:1-2-3}:
Since $\D^m\1$ has weight $2m$, we find
\begin{displaymath}
\text{weight}\:w_M
=2\sum_{\ell=0}^{M-1}\ell
=M(M-1)\:. 
\end{displaymath}
The order of vanishing $\text{ord}_P(w_M)$ of $w_M$ at a point $P\in\Gamma\setminus\mathfrak{h}$ 
depends only on the orbit $\Gamma P$ \cite{Z:1-2-3}.
Denote by $\text{ord}_{\infty}(w_M)$ the order of vanishing of $w_M$ at $\infty$
(i.e.\ the smallest integer $n\geq 0$ such that $a_n\not=0$ in the Fourier expansion for $w_M$).
A finite index subgroup $\Gamma$ of $\Gamma_1$ has the fundamental domain
\begin{align}\label{eq: fundamental domain of finite index subgroup of Gamma1}
\sheaf{F}_{\Gamma}
=\gamma_1\sheaf{F}\cup\ldots\cup\gamma_{[\Gamma_1:\Gamma]}\sheaf{F}
\:
\end{align}
\cite{Gu:1962}.
Thus all orders of vanishing for $\Gamma$ differ from those for $\Gamma_1$ by the same factor. 
Thus \cite[Propos.\ 2 on p.\ 9]{Z:1-2-3} generalises to subgroups $\Gamma\subset\Gamma_1$ and to 
\begin{displaymath}
\text{ord}_{\infty}(w_M)+\sum_{P\in\Gamma\setminus\mathfrak{h}}\frac{1}{n_P}\text{ord}_P(w_M)=\frac{M(M-1)}{12}
\:,
\end{displaymath}
where $n_P$ is the order of the stabiliser.
Since
\begin{displaymath}
\text{ord}_{\infty}(w_M)
=\sum_{s=1}^M\kappa_s
=\frac{M(M-1)}{12}\:, 
\end{displaymath}
we have $\text{ord}_P(w_M)=0$ for $P\in\Gamma\setminus\mathfrak{h}$.
Thus we can divide by $w_M$ to obtain 
\begin{displaymath}
\sum\tilde{\alpha}_i\D^i\1_j=0 
\end{displaymath}
for $j=1,\ldots,M$ and the modular forms $\tilde{\alpha}_i=\frac{w_i}{w_M}$.

Since there is no modular form of weight $2$, one needs $w_{M-1}=0$.
This can be checked explicitely as follows:
Let 
\begin{displaymath}
\1_{\kappa_s}
=q^{\kappa_s}(1+O(q))\:.
\end{displaymath}
We have
\begin{displaymath}
\D^m\1_{\kappa_s}
=\prod_{\ell=0}^{m-1}\left(q\frac{d}{dq}-\frac{\ell}{6}\right)q^{\kappa_s}(1+O(q))
\end{displaymath}
Thus for $M<12$,
\begin{displaymath}
q^{-\kappa_s}\mathcal{L}^{(2,\nu)}q^{\kappa_s}
=\sum_{m=0}^M\alpha_m\prod_{\ell=0}^{m-1}\left(\kappa_s-\frac{\ell}{6}\right)
\:.
\end{displaymath}
On the other hand,
since $\mathcal{L}^{(2,\nu)}\1_{\kappa_s}=0$ for $s=1,\ldots,M$, 
we have
\begin{displaymath}
\sum_{m=0}^M\alpha_m\prod_{\ell=0}^{m-1}\left(\kappa_s-\frac{\ell}{6}\right)
=\prod_{s=1}^{M}(\kappa-\kappa_s)
\end{displaymath}
The coefficient of $\kappa^{M-1}$ yields 
\begin{displaymath}
\alpha_{M-1}
=\sum_{s=1}^{M}\kappa_s
+\sum_{\ell=1}^{M}\frac{1-\ell}{6}\:,
\end{displaymath}
which vanishes.
\end{proof}

\subsection{Variation of the conformal structure}\label{Subsection: variation of the conformal structure for g=1}

Throughout this section, $\Sigma: y^2=p$ is the genus $1$ Riemann surface defined by
\begin{displaymath}
p(x)
=4(x-X_1)(x-X_2)(x-X_3)\:.
\end{displaymath}
We assume that
\begin{align}\label{eq: vanishing condition on X}
\sum_{i=1}^3X_i=0
\:, 
\end{align}
or equivalently, there exist $a,b\in\C$ such that
\begin{align}\label{def: polynomial p defining the torus as doubl cover of CP1}
p(x)
=4(x^3+ax+b)
\:. 
\end{align}
We shall use the following notation:
Let $m(X_1,\xi_1,\ldots,X_n,\xi_n)$ be a monomial.
We denote by 
\begin{displaymath}
\overline{m(X_1,\xi_1,\ldots,X_n,\xi_n)} 
\end{displaymath}
the sum over all distinct monomials 
$m(X_{\sigma(1)},\xi_{\sigma(1)},\ldots,X_{\sigma(n)},\xi_{\sigma(n)})$,
where $\sigma$ is a permutation of $\{1,\ldots,n\}$.
E.g.\ for $n=3$, eq.\ (\ref{eq: vanishing condition on X}) reads $\overline{X}_1=0$, 
and in eq.\ (\ref{def: polynomial p defining the torus as doubl cover of CP1}), 
\begin{displaymath}
a
=\overline{X_1X_2}
=\underset{i<j}{\sum_{i,j=1}^3}X_iX_j\:,
=X_1X_2+X_1X_3+X_2X_3\:,
\quad
b
=-X_1X_2X_3\:.
\end{displaymath}
The normalised discriminant is \cite[p.\ 87]{S:1973}
\begin{align}\label{Delta(0) in terms of a,b}
\Delta
:=-4a^3-27b^2
\:.
\end{align}
Let 
\begin{displaymath}
V_3
:=\begin{pmatrix}
1&X_1&X_1^2\\
1&X_2&X_2^2\\
1&X_3&X_3^2
\end{pmatrix}
\end{displaymath}
be the $3\times 3$ Vandermonde matrix.
We have
\begin{displaymath}
\det V_3
=\prod_{1\leq i<j\leq 3}(X_j-X_i)
=(X_1-X_2)(X_2-X_3)(X_3-X_1) 
\end{displaymath}
so the discriminant equals
\begin{displaymath}
\Delta
=(\det V_3)^2\:.
\end{displaymath}
It will be convenient to work with the $1$-form 
\begin{displaymath}
\omega
:=d\log\det V_3
=\frac{\xi_1-\xi_2}{X_1-X_2}+\cyc
\:,
\end{displaymath}
where $\xi_j=dX_j$ ($j=1,2,3$), and for $k\geq 0$, with the matrix
\begin{displaymath}
\Xi_{3,k}
:=\begin{pmatrix}
X_1&X_2&X_3\\
1&1&1\\
\xi_1X_1^k&\xi_2X_2^k&\xi_3X_3^k
\end{pmatrix}\:.
\end{displaymath}
We have
\begin{align}\label{eq: quotient of determinants for Xi(3,k)}
\frac{\det\Xi_{3,k}}{\det V_3} 
\=\frac{\xi_1X_1^k}{(X_1-X_2)(X_3-X_1)}+\cyc
=-4\sum_{s=1}^3\frac{\xi_sX_s^k}{p'(X_s)}\:.
\end{align}

\begin{proposition}
We have
\begin{align}
\det\Xi_{3,1}
\=-\frac{\omega}{3}\det V_3
\:,\label{eq: omega as difference of xi's over difference of Xi's}\\
\det\Xi_{3,2} 
\=-\frac{a}{3}\det\Xi_{3,0}
\:.\label{eq: quotient of determinants for Xi(3,2)}
\end{align}
\end{proposition}

\begin{proof}
By eq.\ (\ref{eq: vanishing condition on X}),
\begin{displaymath}
d\det V_3
=-3X_1(d X_1)(X_2-X_3)+\cyc
=-3\det\Xi_{3,1}\:,
\end{displaymath} 
and eq.\ (\ref{eq: omega as difference of xi's over difference of Xi's}) follows.
We observe that
\begin{align}
\frac{1}{(X_1-X_2)(X_3-X_1)}+\cyc
\=0\label{cyclic sum of 1 over Ni is zero}\:,\\
\frac{X_1}{(X_1-X_2)(X_3-X_1)}+\cyc
\=0\label{cyclic sum of Xi over Ni is zero}
\:.  
\end{align}
By eq.\ (\ref{eq: vanishing condition on X}),
we have
\begin{align}\label{X1 squared}
X_1^2
\=-X_1(X_2+X_3)
=-a+X_2X_3
\:.
\end{align}
Since $\xi_1X_2X_3+\cyc=\overline{\xi_1X_2X_3}$, 
we have by eq.\ (\ref{cyclic sum of 1 over Ni is zero}),
\begin{align*}
\frac{\xi_1X_2X_3}{(X_1-X_2)(X_3-X_1)}+\cyc
\=-\:\frac{\xi_2X_3X_1+\xi_3X_1X_2}{(X_1-X_2)(X_3-X_1)}+\cyc\:.
\end{align*}
Moreover, $\overline{\xi_1}=0$, so
\begin{align*}
-\:\frac{\xi_2X_3X_1+\xi_3X_1X_2}{(X_1-X_2)(X_3-X_1)}+\cyc
\=\left(\frac{\xi_1(X_3X_1+X_1X_2)}{(X_1-X_2)(X_3-X_1)}+\cyc\right)
+\left(\frac{(\xi_3X_3+\xi_2X_2)X_1}{(X_1-X_2)(X_3-X_1)}+\cyc\right)\\
\=a\frac{\det\Xi_{3,0}}{\det V_3}
-\left(\frac{\xi_1X_2X_3}{(X_1-X_2)(X_3-X_1)}+\cyc\right)\\
\-\left(\frac{\xi_1X_1^2}{(X_1-X_2)(X_3-X_1)}+\cyc\right)\:,
\end{align*}
using symmetry of $\overline{\xi_1X_1}$ and eq.\ (\ref{cyclic sum of Xi over Ni is zero}). 
We conclude that
\begin{align}\label{eq for xi1X2X3 over N1}
\frac{\xi_1X_2X_3}{(X_1-X_2)(X_3-X_1)}+\cyc
\=\frac{2a}{3}\frac{\det\Xi_{3,0}}{\det V_3}
\:.
\end{align}
\end{proof}

\begin{proposition}\label{propos: determinants in terms of a,b and da, db}
We have
\begin{align}
\det(\Xi_{3,0}V_3)
\=9b\:da-6 a\:db\:,\label{det Xi(3,0)V in terms of da and db}\\
\det(\Xi_{3,1}V_3)
\=2a^2\:da+9b\:db
\:.\label{det Xi(3,1)V in terms of da and db}
\end{align}
\end{proposition}

\begin{proof}
For either equation, we show that under specific additional assumptions, 
both sides are proportional to $\Delta$, with the same proportionality factor.
The proof of general statement is moved to the Appendix. 

We first address eq.\ (\ref{det Xi(3,1)V in terms of da and db}) under the assumption that $\xi\propto X$.
In this case we have on the l.h.s.,
\begin{displaymath}
\det\Xi_{3,1}|_{\xi\propto X}\det V_3
\propto 
-\det V_3^2
=-\Delta\:.
\end{displaymath}
On the other hand,
\begin{align*}
da
\=\overline{\xi_1X_2}
\propto 2\overline{X_1X_2}
=2a\:,\\
db
\=-\overline{\xi_1X_2X_3}
\propto -3\overline{X_1X_2X_3}=3b\:, 
\end{align*}
so in this case the r.h.s.\ of eq.\ (\ref{det Xi(3,1)V in terms of da and db}) equals $-\Delta$.
The general case is treated in Appendix \ref{Appendix: Proof of eq. det Xi(3,1)V in terms of da and db}.
In order to prove eq.\ (\ref{det Xi(3,0)V in terms of da and db}),
suppose first that
\begin{align}\label{assumption: xi propto (Xi squared minus xi 0)}
\xi_i\propto X_i^2-\xi_0\:,
\quad
\xi_0
:=\frac{1}{3}\left(\sum_{i=1}^3X_i^2\right) 
=\frac{1}{3}\;\overline{X_1^2}\:.
\end{align}
Note that this does not affect condition (\ref{eq: vanishing condition on X}).
On the l.h.s.\ of eq.\ (\ref{det Xi(3,0)V in terms of da and db}), 
we have by assumption (\ref{assumption: xi propto (Xi squared minus xi 0)}),
\begin{displaymath}
\det\Xi_{3,0}
=\det
\begin{pmatrix}
\xi_1&\xi_2&\xi_3\\
X_1&X_2&X_3\\
1&1&1
\end{pmatrix} 
\propto\: 
\det
\begin{pmatrix}
X_1^2&X_2^2&X_3^2\\     
X_1&X_2&X_3\\
1&1&1
\end{pmatrix}
-\det 
\begin{pmatrix}
\xi_0&\xi_0&\xi_0\\     
X_1&X_2&X_3\\
1&1&1
\end{pmatrix}\:,
\end{displaymath}
where the latter determinant is zero. 
Thus $\det\Xi_{3,0}\propto\det V_3$, and
\begin{align*}
\det\Xi_{3,0}|_{\xi\propto X^2-\xi_0}\det V_3
\propto\:&
\det
\begin{pmatrix}
\overline{\xi_1}&\overline{\xi_1X_1}&\overline{\xi_1X_1^2}\\
\overline{X_1}&\overline{X_1^2}&\overline{X_1^3}\\
3&\overline{X_1}&\overline{X_1^2}\\
\end{pmatrix}
=-\Delta\:.
\end{align*}
On the other hand, by the fact that $\overline{X_1}=0$, we have
\begin{align*}
\xi_0
\=\frac{1}{3}\:\overline{X_1^2}
=-\frac{2}{3}\:\overline{X_1X_2} 
=-\frac{2a}{3}\:,\\
\overline{X_1^3}
\=-3\overline{X_1^2X_2}-6b\:,\\
\overline{X_1^2X_2}
\=\overline{X_1X_2(X_1+X_2)}
=-3b\:.
\end{align*}
So on the r.h.s.\ of eq.\ (\ref{det Xi(3,0)V in terms of da and db}), 
\begin{align*}
d a
\=-\overline{\xi_1X_1} 
\propto-\overline{X_1^3}+\xi_0\overline{X_1}
=-\overline{X_1^3}
=-3b\:,\\
d b
\=-\overline{\xi_1X_2X_3}
\propto-\overline{X_1^2X_2X_3}+\xi_0\overline{X_1X_2}
=b\overline{X_1}+\xi_0 a
=\xi_0 a
=-\frac{2}{3}\:a^2\:.
\end{align*}
From this and eq.\ (\ref{Delta(0) in terms of a,b}) follows eq.\ (\ref{det Xi(3,0)V in terms of da and db}).
The general case without the assumption (\ref{assumption: xi propto (Xi squared minus xi 0)}) 
is proved in Appendix \ref{Appendix: Proof of eq. det Xi(3,0)V in terms of da and db}.

\end{proof}

\begin{lemma}\label{Lemma: formula for the connection 1-form omega in terms of tau and lambda}
In terms of the modulus $\tau$ and the scaling parameter $\ell$ (the inverse length of the real period), 
we have
\begin{displaymath}
\omega
=\pi\i E_2\:d\tau-6d\log\ell\:.
\end{displaymath}
\end{lemma}

\begin{proof}
In eq.\ (\ref{def: polynomial p defining the torus as doubl cover of CP1}),
we have \cite{S:1973}
\begin{align}\label{a,b in terms of the Eisenstein series}
a
=-\frac{\pi^4}{3}\ell^4E_4\:,
\hspace{1cm}
b
=-\frac{2\pi^6}{27}\ell^6E_6\:, 
\end{align}
so by eq.\ (\ref{Delta(0) in terms of a,b}),
\begin{align}\label{Delta(0) in terms of tau and lambda}
\Delta
=\frac{4\pi^{12}}{27}\ell^{12}(E_4^3-E_6^2)
\:.
\end{align}
Using that
\begin{align}\label{Serre derivative of E4 and E6}
\D E_4=-\frac{E_6}{3}\:,
\quad\quad
\D E_6=-\frac{E_4^2}{2}\:
\end{align}
\cite[Proposition 15, p.\ 49]{Z:1-2-3}, where $\D$ is the Serre derivative,
we find 
\begin{displaymath}
2\:a^2\frac{\partial}{\partial\tau}a+9\:b\:\frac{\partial}{\partial\tau}b
=-\frac{\i\pi}{3}E_2\Delta
\:.                                             
\end{displaymath}
From eqs (\ref{a,b in terms of the Eisenstein series}) and (\ref{Delta(0) in terms of tau and lambda}) follows
\begin{displaymath}
2a^2\frac{\partial}{\partial\ell}a+9b\frac{\partial}{\partial\ell}b
=-\frac{2}{\ell}\Delta
\:.
\end{displaymath}
So by eq.\ (\ref{det Xi(3,1)V in terms of da and db}),
\begin{displaymath}
\frac{\det\Xi_{3,1}}{\det V_3}
=\frac{\i\pi}{3}E_2d\tau-2d\log\ell
\end{displaymath}
and the proposition follows.
\end{proof}

Under variation of the ramification points, the modulus changes according to

\begin{lemma}\label{Lemma: d tau in terms of the quotient of determinants}
Under the conditions of Lemma \ref{Lemma: formula for the connection 1-form omega in terms of tau and lambda},
we have
\begin{align}\label{eq: identity for d tau} 
d\tau
\=-\pi i\ell^2\:\frac{\det\Xi_{3,0}}{\det V_3}\:.
\end{align}
\end{lemma}

Note that proportionality between the differentials on either side of eq.\ (\ref{eq: identity for d tau}) can be seen as follows:
Under the action of 
$\begin{pmatrix}
a&b\\
c&d
\end{pmatrix}\in SL_2(\Z)$, 
both $d\tau$ and $\ell^2$ (the squared inverse length of the real period) transform by a factor of $(c\tau+d)^{-2}$.  
Moreover, both differentials have a simple pole at the boundary of the moduli space: 
$d\tau$ is singular at $\tau=i\infty$, 
while $\frac{\det\Xi_{3,0}}{\det V_3}$ has a pole when two $X_i$ coincide. 
Thus up to a multiplicative constant they must be equal.

\begin{proof}[Proof of Lemma \ref{Lemma: d tau in terms of the quotient of determinants}]
By eqs (\ref{a,b in terms of the Eisenstein series}) -- (\ref{Serre derivative of E4 and E6}),
\begin{displaymath}
9\:b\:\frac{\partial}{\partial\tau}a-6\:a\:\frac{\partial}{\partial\tau}b
=2\pi\i(9\:b\:\D a-6\:a\:\D b)\\
=\frac{\i}{\pi\ell^2}\Delta
\:.                                             
\end{displaymath}
The partial derivatives are actually ordinary derivatives since by eqs (\ref{a,b in terms of the Eisenstein series}),
\begin{align*}
9\:b\:\frac{\partial}{\partial\ell}a-6\:a\:\frac{\partial}{\partial\ell}b
\=0\:.
\end{align*}
Factoring out $d\tau$ on the r.h.s.\ of eq.\ (\ref{det Xi(3,0)V in terms of da and db}) and dividing both sides by $\Delta/(-i\pi\ell^2)$ 
yields the propositioned formula. 
\end{proof}

\subsection{Explicit results for $g=1$}

This section largely uses results obtained for arbitrary genus in \cite{L:PhD14,LN:2017}
though Theorem \ref{Theorem: system of differential equations in algebraic formulation} 
is proved independently using the methods introduced in Subsection \ref{Subsection: variation of the conformal structure for g=1}.
It is shown (Proposition \ref{proposition: equivalence of two systems of ODE for g=1}) 
that the two formulations are equivalent for $g=1$.

\subsubsection{General results}\label{subsection: general results}

Let $1$ and $T$ be the identity field and the Virasoro field, respectively, on $\Sigma$.
It will be useful to work with the field \cite{L:2013,L:PhD14}
\begin{align}\label{def: vartheta}
\vartheta(x)
:=T(x)\:p(x)-\frac{c}{32}\frac{[p'(x)]^2}{p(x)}.1\:.
\end{align}
where $c\in\R$ is the central charge.

In \cite{LN:2017}, we defined a singular metric on $\Sigma$ which is obtained by lifting a polyhedral metric on $\CP^1$,
whose curvature is concentrated on the set of ramification points and equally distributed over this set. 
Let $\1$ be the $0$-point function corresponding to the singular metric.
In our algebraic approach, a second parameter is given by the unknown constant $\A$, 
which is proportional to $\1$ but with an unknown proportionality factor:
\begin{displaymath}
\A
=:\alpha\1 
\:.
\end{displaymath}
We have
\begin{displaymath}
\langle\vartheta(x)\rangle
=-cx\1+\frac{\A}{4} 
\:.
\end{displaymath}
It will be useful to work with the auxiliary function
\begin{displaymath}
\Gamma(x)
:=cx\1-\frac{\A}{2}
\:.
\end{displaymath}
Now Theorem 2 in \cite{L:PhD14, L:2013} yields:

\begin{proposition}\label{proposition: <TT> connected}
Let $\Sigma:y^2=p$ be the Riemann surface defined by eq.\ (\ref{def: polynomial p defining the torus as doubl cover of CP1}).
For $|x_1|,|x_2|$ small, we have
\begin{equation}\label{P[1] for g=1}
\begin{split}
\left\{\langle T(x_1)T(x_2)\rangle
-\1^{-1}\langle T(x_1)\rangle\langle T(x_2)\rangle\right\}&p(x_1)p(x_2)\\
=R(x_1,x_2)
+x_1\Gamma(x_2)\+x_2\Gamma(x_1)
+P^{[1]}
\:.
\end{split}
\end{equation}
Here $P^{[1]}$ is constant in position, and
\begin{displaymath}
R(x_1,x_2)
=R^{[1]}(x_1,x_2)
+y_1y_2R^{[y_1y_2]}(x_1,x_2)
\:,
\end{displaymath}
where $R^{[y_1y_2]}(x_1,x_2)$ is a rational function of $x_1$ and $x_2$ and
\begin{equation*}
\begin{split}
R^{[1]}&(x_1,x_2)\\
\=\frac{c}{4}\frac{p(x_1)p(x_2)}{(x_1-x_2)^4}\1
+\frac{c}{32}\frac{p'(x_1)p'(x_2)}{(x_1-x_2)^2}\1
+\frac{1}{2}\frac{p(x_2)\langle\vartheta(x_1)\rangle+p(x_1)\langle\vartheta(x_2)\rangle}{(x_1-x_2)^2}
\:.
\end{split}
\end{equation*}
\end{proposition}

For $P^{[1]}$ defined by eq.\ (\ref{P[1] for g=1}), 
let
\begin{displaymath}
P
:=P^{[1]}+\frac{1}{16}\alpha\A
\:.
\end{displaymath}
In terms of the one-point function of the normal ordered product of $T$,
\begin{displaymath}
\langle N_0(T,T)(x_2)\rangle
=\lim_{x_1\rechts x_2}\left[
\langle T(x_1)T(x_2)\rangle
-\frac{c/2}{(x_1-x_2)^4}\1
-\frac{1}{(x_1-x_2)^2}\langle T(x_1)+T(x_2)\rangle\right] 
\:,
\end{displaymath}
we have by the theorem, by eq.\ (\ref{def: vartheta}) and by the fact that $\1^{-1}\langle\vartheta(x)\rangle^2=cx\Gamma(x)+\frac{\alpha\A}{16}$,
\begin{align*}
P
\=p^2\langle N_0(T,T)\rangle
-\left(\frac{c}{32}\right)^2\frac{[p']^4}{p^2}-\frac{c}{16}\frac{[p']^2}{p}\langle\vartheta\rangle
-\oint_{x_2}\frac{R(x_1,x_2)}{x_1-x_2}\frac{dx_1}{2\pi i}
-(c+2)x\Gamma(x)
\:.
\end{align*}

\begin{proposition}\label{proposition: 2-point function of vartheta}
Let $\Sigma:y^2=p$ be the Riemann surface defined by eq.\ (\ref{def: polynomial p defining the torus as doubl cover of CP1}).
We have for $x_1$ close to $x_2$, 
\begin{equation*}
\langle\vartheta(x_1)\vartheta(x_2)\rangle
=R_*(x_1,x_2)
+\langle\vartheta(x_1)\vartheta(x_2)\rangle_r
\:. 
\end{equation*}
Here $\langle\vartheta(x_1)\vartheta(x_2)\rangle_r$ is regular at $x_1=x_2$.
We have
\begin{enumerate}
 \item
\begin{displaymath}
R_*(x_1,x_2)
=R_*^{[1]}(x_1,x_2)+y_1y_2R_*^{[y_1y_2]}(x_1,x_2)
\end{displaymath}
where $R_*^{[y_1y_2]}$ is a rational function in $x_1$ and $x_2$, and
\begin{align*}
R_*^{[1]}(x_1,x_2)
\=R^{[1]}(x_1,x_2)
-\left\{\frac{5c}{4}(c+2)x_1x_2
+\frac{3c}{2}a
+\frac{5c}{4}(x_1-x_2)^2\right\}\1
\:.
\end{align*}
\item\label{proposition: <vartheta vartheta>r}
\begin{displaymath}
\langle\vartheta(x_1)\vartheta(x_2)\rangle_r
=\langle\vartheta(x_1)\vartheta(x_2)\rangle_r^{[1]}+y_1y_2\langle\vartheta(x_1)\vartheta(x_2)\rangle_r^{[y_1y_2]}
\end{displaymath}
where $\langle\vartheta(x_1)\vartheta(x_2)\rangle_r^{[y_1y_2]}$ is a polynomial in $x_1$ and $x_2$, and
\begin{equation}\label{eq: <vartheta 1 vartheta 2>r with the constant not yet determined} 
\langle\vartheta(x_1)\vartheta(x_2)\rangle_r^{[1]}
=\frac{1}{2}(c+2)\left\{x_1\Gamma(x_2)+x_2\Gamma(x_1)\right\}
+D
+\frac{c}{2}(x_1-x_2)^2\1
\:.
\nn
\end{equation}
Here $D$ is constant in position. 
\end{enumerate}
\end{proposition}

\begin{proof}
 \begin{enumerate}
  \item By the graphical representation theorem \cite{LN:2017}, 
\begin{equation}\label{eq.: graphical representation of vartheta's}
\begin{split}
\langle\vartheta(x_1)\vartheta(x_2)\rangle
\=R_*(x_1,x_2)
+\langle\vartheta(x_1)\vartheta(x_2)\rangle_r
\:,\\
R_*(x_1,x_2)
:\=\frac{c}{32}f_{12}^2\1+\frac{1}{4}f_{12}\langle\vartheta(x_1)+\vartheta(x_2)\rangle
\:, 
\end{split}
\end{equation} 
where $f_{12}:=(y_1+y_2)^2/(x_1-x_2)^2$.
$R_*^{[1]}(x_1,x_2)+$ and $R_*^{[y_1y_2]}(x_1,x_2)$ are obtained by direct computation.
\item For $x\rechts\infty$,
\begin{displaymath}
\vartheta(x)
=-cx.1+O(1)\:,
\end{displaymath}
so for large $x$, and for $s=1,2,3$, we have
\begin{align}\label{incomplete eq. for <vartheta(x)vartheta(Xs)> for n=3}
\langle\vartheta(x)\vartheta(X_s)\rangle
\=-cx\:\langle\vartheta(X_s)\rangle+O(1)
\:,
\end{align}
On the other hand, for the l.h.s.\ we have by the graphical representation eq.\ (\ref{eq.: graphical representation of vartheta's}),
we have for $x\gg X_s\gg 1$,
\begin{align*}
f_{xX_s}
\=\frac{p(x)}{(x-X_s)^2} 
=4\left(x+2X_s+(a+3X_s)x^{-1}\right)+O(x^{-2})\\
f_{xX_s}^2
\=16(x^2+4xX_s+2a+10X_2^2)+O(x^{-1})
\:.
\end{align*}
So in the region considered,
\begin{align*}
R_*(x,X_s)
\=\frac{c}{32}f_{xX_s}^2\1
+\frac{1}{4}f_{xX_s}\langle\vartheta(x)+\vartheta(X_s)\rangle\\
\=\left(\frac{1}{2}x+X_s\right)\left(-cx\1+\A\right)
+O(x^{-1})
\end{align*}
$\langle\vartheta(x)\vartheta(X_s)\rangle_r$ is a polynomial in $x-X_s$, thus regular at $x=0$.
Thus by eqs\ (\ref{eq.: graphical representation of vartheta's}) and (\ref{incomplete eq. for <vartheta(x)vartheta(Xs)> for n=3}),
\begin{displaymath}
\langle\vartheta(x)\vartheta(X_s)\rangle_r
=-cx\langle\vartheta(X_s)\rangle
+cx\left(\frac{1}{2}x+X_s\right)\1
-\frac{1}{2}x\A
+D_s
\end{displaymath}
where by symmetry, 
\begin{displaymath}
D_s
=\frac{c}{2}X_s^2\1 
-\frac{1}{4}(c+2)X_s\A
+D\:,
\end{displaymath}
with $D\in\1\C$. 
This yields eq.\ (\ref{eq: <vartheta 1 vartheta 2>r with the constant not yet determined}) for arbitrary $x_1,x_2$.
(Note that the degree $2$ polynomial in $x_2$,
\begin{align*}
c_1x_1^2x_2^2+c_2(x_1^2x_2+x_1x_2^2)\:,
\quad
c_1,c_2\in\C\:,
\end{align*}
has a zero at $X_s$ for $s=1,2,3$ and so is identically zero: $c_1=c_2=0$.)
\end{enumerate}
\end{proof}

Our formulae single out the case $c=-2$, whose significance is unclear to us at this stage.
Since it corresponds to a non-minimal model we won't address it here.

\begin{theorem}\label{Theorem: system of differential equations in algebraic formulation}
Let $\Sigma:y^2=p$ be the Riemann surface defined by eq.\ (\ref{def: polynomial p defining the torus as doubl cover of CP1})
and equipped with the singular metric \cite{LN:2017}.
Define a deformation of $\Sigma$ by
\begin{displaymath}
\xi_j=d X_j\:,\quad j=1,2,3\:. 
\end{displaymath}
We have the following system of linear differential equations
\begin{equation}\label{system: variation of zero-point function and A1 in terms of quotients of determinants}
\begin{split}
\left(d+\frac{c}{24}\:\omega\right)\:\1 
\=-\frac{1}{8}\A \:\frac{\det\Xi_{3,0}}{\det V_3}\:,\\
\left(d+\frac{c}{24}\omega\right)\:\A 
\=\:C\:\frac{\det\Xi_{3,0}}{\det V_3}
-\A \:\frac{\det\Xi_{3,1}}{\det V_3}\:,
\end{split}
\end{equation}
where $\omega$ is be the $1$-form defined by eq.\ (\ref{eq: omega as difference of xi's over difference of Xi's}), 
and 
\begin{displaymath}
C:
=-2P-\frac{8c}{3}a\1\:.
\end{displaymath}
\end{theorem}

The formulation of the differential equations using determinants relies on the permutation symmetry of the equations' constituent parts.
This symmetry will continue to be present as the number of ramification points increases.
With the genus, however, also the degree of the polynomial $\langle\vartheta\rangle$ will grow 
and give rise to additional terms having no lower genus counterpart.

\begin{proof}[Proof of the Theorem]
To simplify notations, set $\Theta(x):=4\langle\vartheta(x)\rangle$.
The following two identities will be useful:
\begin{align}
\frac{dp}{p}
\=-\sum_{s=1}^3\frac{\xi_s}{x-X_s}\:,\label{dp/p}\\ 
d\left(\frac{p'}{p}\right)
\=\sum_{s=1}^3\frac{\xi_s}{(x-X_s)^2}\:.\label{d(p'/p)}
\end{align}
Let $\gamma_1$ be a closed path enclosing $X_1\in\CP^1$ and no other zero of $p$. 
$x$ does not define a coordinate close to $X_1$, however $y$ does. 
On the ramified covering, a closed path winds around $X_1$ by an angle of $4\pi$. 
We will work with the $x$ coordinate, 
and mark the double circulation along $\gamma_1$ in $\CP^1$ by a symbolic $2\times\gamma_1$ under the integral. 
When the integrand is Galois-even, the double winding results in a factor of two.
On the other hand, the contour integral of a Galois-odd term corresponding to an angle of $4\pi$ vanishes
when the square root of $p(x)$ changes sign after the first full $2\pi$ circle.
Using eqs (\ref{eq: vanishing condition on X}) on the ramification points, 
and (\ref{def: vartheta}) and (\ref{dp/p}), 
we obtain
\begin{align*}
\frac{d}{dX_1}\1
=\frac{1}{2\pi\i}\ointctrclockwise_{2\times\gamma_1}\langle T(x)\rangle\:&\:dx
=2\lim_{x\rechts X_1}(x-X_1)
\langle T(x)\rangle
\nn\\
\=\frac{1}{8}\left(\frac{c\1}{X_1-X_2}
+\frac{c\1}{X_1-X_3}
+\frac{\Theta(X_1)}{(X_1-X_2)(X_1-X_3)}\right)\\
\=-\frac{1}{2}\frac{(2X_1+X_2+X_3)c\1-\A }{p'(X_1)}\\
\=-\frac{1}{2}\frac{X_1c\1}{p'(X_1)}+\frac{1}{2}\frac{\A }{p'(X_1)}
\:.
\end{align*}
so
\begin{align}\label{eq: d<1>}
d\1
=\frac{c}{8}\frac{\det\Xi_{3,1}}{\det V_3}\1-\frac{1}{8}\frac{\det\Xi_{3,0}}{\det V_3}\A 
\:.
\end{align}
The same argument applies when $X_1$ is replaced with $X_2$ or $X_3$. 
Using eq.\ (\ref{eq: quotient of determinants for Xi(3,k)}),
we recover the differential equation (\ref{system: variation of zero-point function and A1 in terms of quotients of determinants}) for $\1$.
When $\langle T(x)\rangle$ is varied by changing all ramifications points $X_1,X_2,X_3$ simultaneously, 
we must require the position $x$ not to lie on or be enclosed by any of the corresponding three curves 
$\gamma_1$, $\gamma_2$ and $\gamma_3$.
Then we have  
\begin{displaymath}
d\langle T(x)\rangle-\langle T(x)\rangle\:d\log\1
=\sum_{j=1}^3\left(\frac{\1}{2\pi\i}\ointctrclockwise_{2\times\gamma_j}\:\langle T(x')T(x)\rangle_c\:dx'\right)\:\xi_j
\:,
\end{displaymath}
where the connected $2$-point function is given by Proposition \ref{proposition: <TT> connected}.
As mentioned before, the term $y'yR^{[y'y]}(x',x)$ (with ${y'}^2=p(x')$) does not contribute.
For $j=1$ we have
\begin{align}
\ointctrclockwise_{2\times\gamma_1}\:\frac{R^{[1]}(x',x)}{p(x')p(x)}\:\frac{dx'}{2\pi\i}
\=\frac{c}{16}\frac{\1}{(X_1-x)^2}\:\frac{p'(x)}{p(x)}
+\frac{1}{4}\frac{1}{(X_1-x)^2}\frac{\Theta(X_1)}{p'(X_1)}\nn
\end{align}
Moreover, 
\begin{align*}
\ointctrclockwise_{2\times\gamma_1}&
\:\frac{P^{[1]}-\frac{1}{2}\left(x'\Theta(x)+x\Theta(x')\right)-2cx'x\1}{p(x')p(x)}\:\frac{dx'}{2\pi\i}\\
\=
\frac{2P^{[1]}}{p(x)}\frac{1}{p'(X_1)}
-\frac{x}{p(x)}\frac{\Theta(X_1)}{p'(X_1)}
-\frac{\A }{p(x)}\frac{X_1}{p'(X_1)}
\:.
\end{align*}
Using eq.\ (\ref{d(p'/p)}), we obtain
\begin{align}
 d\langle T(x)\rangle
\-\langle T(x)\rangle\:d\log\1
-\frac{c}{32}\1d\left[\frac{p'}{p}\right]^2\nn\\
\=\left(\frac{1}{4}\frac{1}{(X_1-x)^2}-\frac{x}{p(x)}\right)\frac{\xi_1\Theta(X_1)}{p'(X_1)}+\cyc\label{Theta expression}\\
\-\frac{1}{4p(x)}\left(2P^{[1]}\frac{\det\Xi_{3,0}}{\det V_3}-\A \frac{\det\Xi_{3,1}}{\det V_3}\right)
\:.\nn
\end{align}
We deduce the differential equation for $\A $ using the identity
\begin{displaymath}
d\Theta
=4p\:d\left(\frac{\Theta}{4p}\right)+\Theta\:d\log p
\:. 
\end{displaymath}
By eq.\ (\ref{def: vartheta}),
\begin{align*}
4p\:d\left(\frac{\Theta}{4p}\right)|_x
=4p\left(d\langle T(x)\rangle
-\langle T(x)\rangle\:d\log\1
-\frac{c}{32}\1d\left[\frac{p'}{p}\right]^2\right)+\Theta\:d\log\1
\:.
\end{align*}
Now we address $\Theta\:d\log p$. Using partial fraction decomposition of $\Theta/p$, 
\begin{align*}
\frac{\Theta(x)}{p(x)}
=\frac{1}{(x-X_1)}\:\frac{\Theta(X_1)}{p'(X_1)} 
+\cyc.
\end{align*}
multiplying by $p$ and using eq.\ (\ref{dp/p}) yields
\begin{align}
(\Theta\:d\log p)|_x
\=-p(x)\left(\frac{1}{(x-X_1)}\frac{\Theta(X_1)}{p'(X_1)}+\cyc\right)\left(\frac{\xi_j}{(x-X_j)}+\cyc\right)\:.
\label{first term on the r.h.s. of Theta times (xi over Ni +cyc)}
\end{align}
Overall three terms on the r.h.s.\ of eq.\ (\ref{first term on the r.h.s. of Theta times (xi over Ni +cyc)}) 
are equal but opposite to $4p$ times the first term (and its cyclic permutations) in line (\ref{Theta expression}). 
Since $\overline{\xi_1}=0$, we have for the remaining terms in eq.\ (\ref{first term on the r.h.s. of Theta times (xi over Ni +cyc)}),
\begin{align}
-p(x)&\left(\frac{1}{(x-X_1)}\frac{\Theta(X_1)}{p'(X_1)}\left(\frac{\xi_2}{(x-X_2)}+\frac{\xi_3}{(x-X_3)}\right)+\cyc\right)\nn\\
\=4\left((\xi_2X_3+\xi_3X_2)\frac{\Theta(X_1)}{p'(X_1)}+\cyc\right)
+4x\left(\frac{\xi_1\Theta(X_1)}{p'(X_1)}+\cyc\right)\:,\label{line: terms that remain from Theta d log p}
\end{align}
The second term in line (\ref{line: terms that remain from Theta d log p})
is equal but opposite to $4p$ times the second term and its cyclic permutations in line (\ref{Theta expression}).
The first term in line (\ref{line: terms that remain from Theta d log p}) equals
(cf.\ Appendix \ref{Appendix Section: Completion of the proof of Theorem on DE in algebraic formulation})
\begin{align*}
4\left((\xi_2X_3+\xi_3X_2)\frac{\Theta(X_1)}{p'(X_1)}+\cyc\right)
\=-\frac{8c}{3}a\1\frac{\det\Xi_{3,0}}{\det V_3}
-2\A \frac{\det\Xi_{3,1}}{\det V_3}\:.
\end{align*}
Thus we have shown that
\begin{displaymath}
d\Theta
=-\left(2P^{[1]}
+\frac{8c}{3}a\1\right)\frac{\det\Xi_{3,0}}{\det V_3}
-\A \frac{\det\Xi_{3,1}}{\det V_3}
+\Theta\:d\log\1
\:.
\end{displaymath}
Taking eq.\ (\ref{eq: d<1>}) with eq.\ (\ref{eq: omega as difference of xi's over difference of Xi's}) into account,
\begin{align*}
d\A 
\=-\left(2P^{[1]}+\frac{8c}{3}\:a\1\right)\frac{\det\Xi_{3,0}}{\det V_3}
-\A \frac{\det\Xi_{3,1}}{\det V_3}
+\A \:d\log\1
\\
\=-\frac{c}{24}\omega\A 
-\left(2P+\frac{8c}{3}\:a\1\right)\frac{\det\Xi_{3,0}}{\det V_3}
-\A \frac{\det\Xi_{3,1}}{\det V_3}
\:.
\end{align*}
The coefficient of $\det\Xi_{3,0}/\det V_3$ defines $C$.
\end{proof}

\begin{remark}
Denote by $\oneflat$ and $\T$ be the $0$-point function and the position independent $1$-point function of the Virasoro field, respectively,
on the flat torus $(\Sigma,|dz|^2)$) with the analytic coordinate $z$.
Define $\Aflat:=4\ell^2\T$, where $\ell$ is the inverse length of the real period and 
\begin{displaymath}
\Aflat
=:\alphaflat\oneflat\:. 
\end{displaymath}
On the other hand, write $\onesing$ for the $0$-point function w.r.t.\ our singular metric on $\Sigma$, 
and 
\begin{displaymath}
\Asing
=:\alphasing\onesing 
\:.
\end{displaymath}
We have $\alphasing=\alphaflat$ \cite{LN:2017}.
Thus by the ODE (\ref{ODE for zero-point function in z coordinate in the (2,5) minimal model}) for $\oneflat$
and the ODE for $\onesing$ from system (\ref{system: variation of zero-point function and A1 in terms of quotients of determinants}),  
together with eq.\ (\ref{eq: identity for d tau}), 
\begin{displaymath}
d\log\frac{\onesing}{\oneflat}
=-\frac{c}{48}d\log\Delta
\:,
\end{displaymath}
so
\begin{align}\label{eq: identity relating 0-point function in flat metric to that in singular metric}
\onesing
=\Delta^{-\frac{c}{48}}\:\oneflat\:.
\end{align} 
In particular, $\onesing$ is not a modular function. 
\end{remark}

\subsubsection{Application to the $(2,5)$ minimal model}

In the $(2,5)$ minimal model, we have $c=-22/5$ and 
\begin{displaymath}
N_0(T,T)
=\frac{3}{10}\partial^2T
\:. 
\end{displaymath}
The undeterminates of Subsection \ref{subsection: general results} take the values
\begin{equation*}
P
=\frac{143}{25}a\1\:,
\quad
C
=\frac{22}{75}\:a\1\:,
\quad
D
=-\frac{22}{25}a\1
\:,
\end{equation*}
respectively.
Indeed, by \cite[Lemma 16]{L:PhD14}, or \cite[Lemma 5]{L:2013}, we have
\begin{displaymath}
P
=-\frac{7c}{960}[p']^2S(p)\1
+\frac{3}{20}p'\langle\vartheta'\rangle
+\frac{1}{5}p''\langle\vartheta\rangle
-(c+2)x\Gamma
\:,
\end{displaymath} 
where $S$ denotes the Schwarzian derivative.\footnote{The Schwarzian derivative of $f$ is defined by
$S(f)
=\frac{f'''}{f'}-\frac{3}{2}\left[\frac{f''}{f'}\right]^2$, for admissible $f$.}
This yields the claimed value and implies $C=-\frac{c}{15}\:a\1$.
In order to determine the value of $D$, 
we use that \cite{LN:2017}
\begin{displaymath}
\langle\vartheta(X_s)\vartheta(X_s)\rangle_r
=-\frac{c}{480}[p'(X_s)]^2S(p)(X_s)\1
-\frac{1}{10}\:p'(X_s)\:\langle\vartheta'(X_s)\rangle
+\frac{1}{5}\:p''(X_s)\langle\vartheta(X_s)\rangle\:.
\end{displaymath} 
The formula holds for arbitrary genus. 
In the present case, 
\begin{displaymath}
D
=\pi_{\C\1}\langle\vartheta(X_s)\vartheta(X_s)\rangle_r
=\frac{c}{20}\pi_{\C\1}p'(X_s)
=\frac{ca}{5}\1
\end{displaymath}
where $\pi_{\C\1}$ denotes the projection onto $\C\1$.

\begin{proposition}\label{proposition: equivalence of two systems of ODE for g=1}
Let $\Sigma:y^2=p$ be the Riemann surface defined by eq.\ (\ref{def: polynomial p defining the torus as doubl cover of CP1}).
In the $(2,5)$ minimal model, the system (\ref{system: variation of zero-point function and A1 in terms of quotients of determinants})
of differential equations for $\1$ and $\A $ of Theorem \ref{Theorem: system of differential equations in algebraic formulation}
is equivalent to the system \cite{LN:2017}
\begin{equation}\label{system: variation of zero-point function and <vartheta>}
\begin{split}
\left(d-\frac{c}{8}\omega\right)\1
\=2\sum_{s=1}^3\frac{\xi_s}{p'(X_s)}\langle\vartheta(X_s)\rangle\:,\\
\left(d-\frac{c}{8}\:\omega\right)\langle\vartheta(x)\rangle
\=2\sum_{s=1}^3
\frac{\xi_s}{p'(X_s)}\langle\vartheta(X_s)\vartheta(x)\rangle
-\langle\vartheta(x)\rangle\frac{dp}{p}|_x
-\frac{c}{16}p'\:d\left(\frac{p'}{p}\right)|_x
\1
\:.
\end{split}
\end{equation}
\end{proposition}

\begin{proof}
Cf.\ Appendix \ref{Appendix Section: Proof of Proposition {proposition: equivalence of two systems of ODE for g=1}}.
\end{proof}


\subsection{The hypergeometric equation}

We define the torus as a double cover of $\CP^1$ defined by
\begin{displaymath}
y^2
=p(x)
=x(x-1)(x-\lambda)\:,
\end{displaymath}
where $\lambda\in\C$ is the squared Jacobi modulus.
The discriminant is
\begin{displaymath}
\Delta
=\left[\lambda(\lambda-1)\right]^2
\:, 
\end{displaymath}
and the connection 1-form equals
\begin{displaymath}
\omega
=\frac{2\lambda-1}{\lambda(\lambda-1)}d\lambda
=\pi\i E_2d\tau-6d(\log\ell)\:,
\end{displaymath}
where $\ell$ is the inverse length of the real period.
The system of ODEs reads
\begin{equation}\label{system of eqs: ODEs for <1> and <vartheta>}
\begin{split}
\left(\frac{d}{d\lambda}-\frac{c}{8}\omega_{\lambda}\right)\1
\=2\frac{\langle\vartheta(\lambda)\rangle}{p'(\lambda)}\:,\\
%
\left(\frac{d}{d\lambda}-\frac{c}{8}\omega_{\lambda}\right)\frac{\langle\vartheta(\lambda)\rangle}{p'(\lambda)}
\=
-\frac{7c}{480}\:
S(p)(\lambda)\1
+\frac{3}{10}\:\frac{\langle\vartheta'(\lambda)\rangle}{p'(\lambda)}
+\frac{2}{5}\:\frac{p''(\lambda)}{p'(\lambda)}\frac{\langle\vartheta(\lambda)\rangle}{p'(\lambda)}
\end{split}
\end{equation}
where $S$ is the Schwarzian derivative w.r.t.\ the coordinate $x$.

\begin{proposition}
Eq.\ (\ref{eq: 2nd order ODE for <1> in the (2,5) minimal model}) is equivalent to the hypergeometric ODE 
\begin{align}\label{eq: standard hypergeometric eq}
\frac{d^2}{d\lambda^2}f+p\frac{d}{d\lambda}f+qf
=0 
\end{align}
with rational coefficients
\begin{displaymath}
q=\frac{-\alpha\beta}{\lambda(1-\lambda)}\:,
\quad
p
=\frac{\gamma}{\lambda}+\frac{\gamma-(\alpha+\beta+1)}{1-\lambda}\:,
\end{displaymath}
where
\begin{displaymath}
(\alpha,\beta;\gamma)
=\left(\frac{7}{10},\frac{11}{10};\frac{7}{5}\right)
\quad\text{or}\quad 
 \left(\frac{3}{10},-\frac{1}{10};\frac{2}{5}\right) 
\:.
\end{displaymath}
In particular, we have
\begin{align*}
\1_1
\=[\lambda(\lambda-1)]^{-1/30}\:_2F_1\left(\frac{7}{10},\frac{11}{10};\frac{7}{5};\lambda\right)\\
\1_2
\=[\lambda(\lambda-1)]^{11/30}\:_2F_1\left(\frac{3}{10},-\frac{1}{10};\frac{3}{5};\lambda\right)
\:.
\end{align*}
\end{proposition}

The latter relations seem to be new though they're closely related to Schwarz' work \cite{S:1873}
as will be indicated in Section \ref{section: The property of being algebraic}.

\begin{proof}
Let
\begin{displaymath}
g(\lambda)
:=\lambda(\lambda-1)
\:
\end{displaymath}
Let $\omega=\omega_{\lambda}d\lambda$ so
\begin{displaymath}
\omega_{\lambda}
=\frac{g'(\lambda)}{g(\lambda)} 
=\frac{1}{\lambda}-\frac{1}{1-\lambda}
\:.
\end{displaymath}
Let $\1$ be the $0$-point function w.r.t.\ the singular metric. 
We have 
\begin{align}\label{eqs: change from p to g}
g(\lambda)
=p'(\lambda)\:,
\quad
g'(\lambda)
=\frac{1}{2}p''(\lambda)\:,
\quad
g''(\lambda)
=2
\:.
\end{align}
For $\ell\in\R$, we have
\begin{displaymath}
\left(\frac{d}{d\lambda}
+\ell\:\omega_{\lambda}\right)
\left(\frac{d}{d\lambda}
+\ell\:\omega_{\lambda}\right)
=\frac{d^2}{d\lambda^2}
+2\ell\omega_{\lambda}\frac{d}{d\lambda}
+\frac{2\ell}{g}
+\ell(\ell-1)\omega_{\lambda}^2
\:.
\end{displaymath}
Moreover, for monic $p$ we have
\begin{align}\label{eq: <vartheta'> for g=1}
\langle\vartheta'(x)\rangle
=-(c/4)\1
\:, 
\end{align}
so from system (\ref{system of eqs: ODEs for <1> and <vartheta>}) follows that $\1$ lies in the kernel of the linear operator
\begin{displaymath}
\frac{d^2}{d\lambda^2}
+\frac{3}{10}\omega_{\lambda}\frac{d}{d\lambda}
-\frac{3c}{160}\omega_{\lambda}^2
\:
\end{displaymath}
(the coefficient of the $1/g$ term equals $2\ell-\frac{7c}{240}p'''+\frac{3}{10}\langle\vartheta'\rangle/\1=0$).
For
\begin{displaymath}
F(\lambda)
=\ell\:\log g(\lambda)
\:
\end{displaymath}
and for $m\in\N_0$ we have 
\begin{align}
e^{-F}\:\frac{d^m}{d\lambda^m}\:e^{F}
\=\left(\frac{d}{d\lambda}
+\ell\:\omega_{\lambda}\right)^m\:,
\label{eq: conjugation of m-th order derivative with exp F}
\end{align}
where the $m$-th power refers to the composition of differential operators. 
Now let $\ell=k-\frac{c}{8}$.
By eqs\ (\ref{eqs: change from p to g}), (\ref{eq: <vartheta'> for g=1}) 
and (\ref{eq: conjugation of m-th order derivative with exp F}),
the system of ODEs (\ref{system of eqs: ODEs for <1> and <vartheta>})
is equivalent to the system 
\begin{equation}\label{system of eqs: ODEs for <1>*k and <vartheta>*k}
\begin{split}
\left(\frac{d}{d\lambda}-k\omega_{\lambda}\right)\1^*_k
\=2\frac{\langle\vartheta(\lambda)\rangle^*_k}{g(\lambda)}\:,\\
\left(\frac{d}{d\lambda}-k\omega_{\lambda}\right)\frac{\langle\vartheta(\lambda)\rangle^*_k}{g(\lambda)}
\=
\left(\frac{7c}{80}\omega_{\lambda}^2
-\frac{13c}{80}\frac{1}{g(\lambda)}\right)\1^*_k
+\frac{4}{5}\:\omega_{\lambda}\frac{\langle\vartheta(\lambda)\rangle^*_k}{g(\lambda)}
\end{split}
\end{equation}
for the amended functions 
\begin{displaymath}
\1^*_k
:=e^{F(\lambda)}\1\:,\quad
\langle\vartheta(\lambda)\rangle^*_k
:=e^{F(\lambda)}\langle\vartheta(\lambda)\rangle
\:.
\end{displaymath}
From the system (\ref{system of eqs: ODEs for <1>*k and <vartheta>*k}) follows that $\1^*_k$ lies in the kernel of the linear operator
\begin{align*}
\frac{d^2}{d\lambda^2}
-\left(\frac{4}{5}+2k\right)\omega_{\lambda}\frac{d}{d\lambda}
+\left(\frac{4k}{5}-\frac{7c}{40}+k(k+1)\right)\omega_{\lambda}^2
+\left(\frac{13c}{40}-2k\right)\frac{1}{g}
\:.
\end{align*} 
Only those values of $k$ are allowed for which the second order poles drop out,
these are
\begin{displaymath}
k_{1/2}
=-\frac{7}{10},-\frac{11}{10}
\:. 
\end{displaymath}
The corresponding function $\1^*_k$ solves 
\begin{displaymath}
\left[
\frac{d^2}{d\lambda^2}
-\left(\frac{4}{5}+2k\right)\omega_{\lambda}\frac{d}{d\lambda}
+\left(\frac{13c}{40}-2k\right)\frac{1}{g(\lambda)}\right]\1^*_k
=0
\:.
\end{displaymath} 
Comparison with hypergeometric differential equation (\ref{eq: standard hypergeometric eq}) yields
\begin{displaymath}
(\alpha\beta,\:\gamma)
=\left(\frac{13c}{40}-2k,\:-2k-\frac{4}{5}\right)
=
\begin{cases}
\left(-\frac{3}{100},\:\frac{3}{5}\right)&\quad\text{for $k=-\frac{7}{10}$}\\
\:\left(\frac{77}{100},\:\frac{7}{5}\right)&\quad\text{for $k=-\frac{11}{10}$}
\end{cases}
\end{displaymath}
Moreover,
\begin{displaymath}
\alpha+\beta
=2\gamma-1
=
\begin{cases}
\frac{1}{5}&\quad\text{for $k=-\frac{7}{10}$}\\
\frac{9}{5}&\quad\text{for $k=-\frac{11}{10}$}
\end{cases}
\end{displaymath}
This yields the propositioned values for $\alpha,\beta$ and gives
\begin{align*}
\1^*_{-7/10}
\=\:_2F_1\left(\frac{7}{10},\frac{11}{10};\frac{7}{5};\lambda\right)\\
\1^*_{-11/10}
\=\:_2F_1\left(\frac{3}{10},-\frac{1}{10};\frac{3}{5};\lambda\right)
\end{align*}
To make the identification with the Rogers-Ramanujan functions $H$ and $G$ defined by eqs (\ref{defs: Rogers-Ramanujan functions}),
we recall eq.\ (\ref{eq: identity relating 0-point function in flat metric to that in singular metric}).
Thus
\begin{itemize}
\item for $k=-\frac{7}{10}$, we have 
$\ell
=-\frac{3}{20}$ and
\begin{displaymath}
\:_2F_1\left(\frac{7}{10},\frac{11}{10};\frac{7}{5};\lambda\right)
=g(\lambda)^{-\frac{3}{20}}\onesing
=\Delta^{-\frac{3}{40}}\onesing
=\Delta^{-\frac{3}{40}-\frac{c}{48}}H
=\Delta^{\frac{1}{60}}H
\end{displaymath}
lies in the kernel of
\begin{displaymath}
\frac{d^2}{d\lambda^2}
+\frac{3}{5}\frac{1}{\lambda\:(\lambda-1)}\frac{d}{d\lambda}+\frac{3}{100}\frac{1}{\lambda\:(\lambda-1)}
\:.
\end{displaymath}
\item
For $k=-\frac{11}{10}$, we have 
$\ell
=-\frac{11}{20}$ and
\begin{displaymath}
\:_2F_1\left(\frac{3}{10},-\frac{1}{10};\frac{3}{5};\lambda\right)
=g(\lambda)^{-\frac{11}{20}}\onesing
=\Delta^{-\frac{11}{40}}\onesing
=\Delta^{\frac{c}{16}-\frac{c}{48}}G
=\Delta^{-\frac{11}{60}}G
\end{displaymath}
lies in the kernel of
\begin{displaymath}
\frac{d^2}{d\lambda^2}
+\frac{7}{5}\frac{1}{\lambda\:(\lambda-1)}\frac{d}{d\lambda}-\frac{77}{100}\frac{1}{\lambda\:(\lambda-1)}
\:.
\end{displaymath}
\end{itemize}
This completes the proof.
\end{proof}

\section{Algebraicity of the Rogers-Ramanujan characters}\label{section: The property of being algebraic}

Besides the analytic approach, there is an algebraic approach to the characters.


From general theory, we know that any two modular functions are algebraically dependent \cite[Propos 3, p. 12]{Z:1-2-3},
so 

\begin{proposition}
The Rogers Ramanujan functions are algebraic in the $j$-invariant.
\end{proposition}

%
%

We are interested in generalising this result to higher genus.
As a preparation, the specific algebraic equations for the Rogers-Ramanujan functions will be discussed. 
%
%

\subsection{Schwarz' list}

A necessary condition for the general solution of the hypergeometric differential equation (\ref{eq: standard hypergeometric eq}) to be algebraic in $\lambda$ is that $\alpha,\beta,\gamma\in\Q$
(Kummer), which we will assume in the following. 

\begin{proposition}
Let $f_1,f_2$ be solutions of (\ref{eq: standard hypergeometric eq}),
for some choice of $\alpha,\beta,\gamma\in\Q$,
such that 
\begin{displaymath}
s=f_1/f_2 
\end{displaymath} 
is algebraic.
Then $f_1,f_2$ are themselves algebraic.
\end{proposition}

A particularly neat argument is due to Heine \cite[and reference therein]{S:1873}.

\begin{proof}
Let $W =f_1'f_2-f_2'f_1$ be the Wronskian.
Since $s$ is algebraic and $s'=W/f_2^2$, it suffices to show that $W$ is algebraic:
We have
\begin{displaymath}
W'
=f_1''f_2-f_2''f_1
=-pW
\end{displaymath}
by eq.\ (\ref{eq: standard hypergeometric eq}),
so for $A,B$ such that
\begin{displaymath}
p(\lambda)
=-\frac{A}{\lambda}-\frac{B}{\lambda-1}\:, 
\end{displaymath}
we have
\begin{displaymath}
W
\sim\exp\left(-\int p\:d\lambda\right)
=\lambda^{A}(\lambda-1)^{B}\:.
\end{displaymath}
By assumption $A,B\in\Q$.
\end{proof}

Given two independent algebraic solutions $f_1,f_2$ to (\ref{eq: standard hypergeometric eq}),
their quotient 
\begin{displaymath}
s=f_1/f_2 
\end{displaymath}
solves a third order differential equation in $\lambda$ \cite[p.\ 299]{S:1873},
which involves the Schwarzian derivative.
By linearity of (\ref{eq: standard hypergeometric eq}), the space of solutions is invariant under M\"obius transformations.
$s$ defines a map
\begin{equation*}
\begin{split}
s:\quad\CP^1\setminus\{0,1,\infty\}\rechts&\CP^1\left(\cong S^2\right)\\
\lambda\:\mapsto&\:(f_1:f_2) \:. 
\end{split}
\end{equation*}
Suppose $f_1^{\R},f_2^{\R}$ are real on $(0,1)$ 
and their quotient $s^{\R}=f_1^{\R}/f_2^{\R}$ maps the interval $(0,1)$
onto a segment $I_{(0,1)}$ of $\RP^1\cong S^1$.
Via an analytic extension to $\mathfrak{h}$, $s^{\R}$ can be extended to the intervals $(-\infty,0)$ and $(1,\infty)$.
For $\eps>0$, the interval $(-\eps,\eps)$ is mapped to two arcs forming some angle.
Together, the images of $(0,1)$, $(1,\infty)$ and $(-\infty,0)$ form a triangle in $\CP^1$. 
In the elliptic case (angular sum $>180^{\circ}$), the triangle is conformally equivalent to a spherical triangle on $S^2$ whose edges are formed by arcs of great circles. 

By crossing any of the intervals $(1,\infty)$ $(0,1)$, or $(-\infty,0)$, 
$s^{\R}$ can be further continued to $\H^-$. 
We have a correspondence between reflection symmetry w.r.t.\ the real line in the $\lambda$-plane
and circle inversion w.r.t.\ the respective triangle edge in $\CP^1$.

Analytic continuating along paths circling the singularities in any order
may in general produce an infinite number of triangles in $\CP^1$. 
The number is finite iff the quotient of solutions is finite \cite[Sect.\ 20]{R:1851}.
For angle sums $\leq 180^{\circ}$ finite coverings are impossible.

Thus the problem is transformed into sorting out all spherical triangles whose symmetric and congruent repetitions lead to a finite number only of triangles of different shape and position. 

A necessary condition for a spherical shape and its symmetric and congruent repetions to form a closed Riemann surface
is that the edges lie in planes which are symmetry planes of a regular polytope. 

For the spherical triangles, this leads to a finite list of triples of angles that correspond to platonic solids. 

The Rogers-Ramanujan functions feature as the most symmetric case (no.\ XI, i.e.\ all three angles equal $2\pi/5$) 
in the list of Schwarz \cite{S:1873}.

\hspace{1cm}


The compactified fundamental domain $\overline{\Gamma_1\setminus\mathfrak{h}}=\Gamma_1\setminus(\mathfrak{h}\cup\Q\cup\{\infty\})$
of $\Gamma_1$ \cite{Z:1-2-3} is conformally equivalent to $\CP^1$. 
The $j$-invariant defines a Hauptmodul for $\Gamma_1$.

On the other hand, the modular curve of the principal congruence subgroup $\Gamma(N)$ has $g=0$ iff $1\leq N\leq 5$.
For $N\geq 2$, the map $\Gamma(N)\setminus\mathfrak{h}\rechts\CP^1$ is conformal outside of the cusps.
Thus the angle $\frac{\pi}{3}$ at $\rho$ is preserved under this map.
Since $N$ copies of the fundamental domain of $\Gamma_1$ meet in the cusp at $i\infty$,
the compatified fundamental domain of $\Gamma(N)$ defines a finite covering iff 
$\frac{2\pi}{N}+\frac{2\pi}{3}>\pi$, or equivalently $N<6$.

For $N=5$, the angle at the image of $i\infty$ equals $72^{\circ}$.
The modular curve $\Gamma(5)\setminus(\mathfrak{h}\cup\Q\cup\{\infty\})$ has the symmetry of an icosahedron.
By modularity on $\Gamma(5)$, 
$r(\tau)=\1_1/\1_2$ defines a map
\begin{displaymath}
\Gamma(5)\setminus\mathfrak{h}\rechts\CP^1
\end{displaymath}
$r(\tau)$ is a Hauptmodul for $\Gamma(5)$.
We have $[\Gamma_1:\Gamma(5)]=120$ \cite{Gu:1962},
so by eq.\ (\ref{eq: fundamental domain of finite index subgroup of Gamma1}), 
the fundamental domain of $\Gamma(5)$ defines an $120$-fold covering of $\CP^1$,
and $r$ and $j$ are rational functions of one another.


\subsection{Klein's invariants}

Felix Klein reverses the order of arguments used by Schwarz. 

\begin{theorem}\label{theorem: Klein}(Felix Klein)
The icosahedral irrationality 
\begin{equation}\label{eq: Klein's algebraic fct related to the icosahedron}
q^{1/5}
\frac{\sum_{n=-\infty}^{\infty}(-1)^nq^{\frac{5n^2+3n}{2}}}{\sum_{n=-\infty}^{\infty}(-1)^nq^{\frac{5n^2+n}{2}}}
\:,
\end{equation}
\cite[Part I, eq.\ (20), p.\ 146]{K:1884} is algebraic.  
\end{theorem}

By the Jacobi triple product identity, we have \cite{Z:1987}
\begin{align*}
\prod_{n=0,\pm 2\:\text{mod}\:5}(1-q^n)
\=\sum_{n=-\infty}^{\infty}(-1)^nq^{\frac{5n^2+n}{2}}
\\
\prod_{n=0,\pm 1\:\text{mod}\:5}(1-q^n)
\=\sum_{n=-\infty}^{\infty}(-1)^nq^{\frac{5n^2+3n}{2}}
\end{align*}
so the function (\ref{eq: Klein's algebraic fct related to the icosahedron}) is $z=r(\tau)=\1_1/\1_2$.
Klein shows that $\tilde{z}=z^5$ satisfies the icosahedral equation
\begin{displaymath}
(\tilde{z}^4-228\tilde{z}^3+494\tilde{z}^2+228\tilde{z}+1)^3+\tilde{z}(\tilde{z}^2+11\tilde{z}-1)^5j(\tau)
=0\:,
\end{displaymath}
where $j(\tau)$ is the modular $j$-invariant. The icosahedral equation is the minimal polynomial of $\tilde{z}$ over $\Q(j(\tau))$.
It yields an expression of $j(\tau)$ as a rational function of $r(\tau)$,
and $r(\tau)^5$ defines a $12$-fold covering of $\CP^1$.

A modern treatment of Klein's invariants can be found in \cite{D:2005}.

\hspace{0.5cm}

$r$ is determined up to linear fractional transformation,
so its Schwarzian derivative is unique, and we obtain a third order ODE for $r$ as a function of $j$.
$\1_1$ and $\1_2$ define projective coordinates 
or elements in the two-dimensional space of global rational sections in the sheaf $\O(1)$.
Thus they define two solutions of a linear $2$nd order ODE in $j$ with rational coefficients.

As an algebraic function of 
\begin{displaymath}
j
=2^8\frac{(1-\lambda(1-\lambda))^3}{\lambda^2(1-\lambda)^2} 
\:,
\end{displaymath}
$r$ is also an algebraic function of $\lambda$. This leads to our corresponding ODEs w.r.t\ $\lambda$.

\pagebreak

\appendix

\section{Appendix}

\subsection{Completion of the proof of Proposition \ref{propos: determinants in terms of a,b and da, db}}

\subsubsection{Some useful formulae}

We have
\begin{align*}
a=\overline{X_1X_2}\:,
\hspace{1cm}
d a
\=d(\overline{X_1X_2})\\
\=\xi_1X_2+\xi_1X_3+\xi_2X_1+\xi_2X_3+\xi_3X_1+\xi_3X_2
=\overline{\xi_1X_2}\\
b=-X_1X_2X_3\:,
\hspace{1cm}
d b
\=-d(X_1X_2X_3)\\
\=-\xi_1X_2X_3-\xi_2X_1X_3-\xi_3X_1X_2
=-\overline{\xi_1X_2X_3}\:. 
\end{align*}
Since $\overline{X_1}=0$, we have 
\begin{align}
(\overline{X_1X_2})^2
=\overline{X_1^2X_2^2}+2X_1X_2X_3\cdot\overline{X_1}
=\overline{X_1^2X_2^2}\:,\label{(X1 X2) squared} 
\end{align}
and
\begin{align}
\overline{\xi_1X_1}
\=-\overline{\xi_1X_2}\:,\label{(xi1 X1)}\\
\overline{\xi_1X_1^2}
\=-\overline{\xi_1X_1X_2}\nn\\
\=-\xi_1X_1(X_2+X_3)+\cyc
=-\overline{X_1X_2}\cdot\overline{\xi_1}+\overline{\xi_1X_2X_3}
=\overline{\xi_1X_2X_3}\:.\label{(xi1 X1 X2) and (xi1 X2 X3)}
\end{align}
Moreover,
\begin{align}
\overline{X_1^3}
\=X_1(X_2+X_3)^2+\cyc
=\overline{X_1X_2^2}+6X_1X_2X_3
=3X_1X_2X_3\:,\label{(X1 to the cube)}
\end{align}
since
\begin{align*}
\overline{X_1X_2^2}
\=-X_1X_2(X_1+X_3)-X_1X_2(X_2+X_3)+\cyc
=-6X_1X_2X_3-\overline{X_1^2X_2}
=-3X_1X_2X_3\:,
\end{align*}
and we have
\begin{align}
\overline{\xi_1X_1^3}
\=\xi_1X_1(X_2+X_3)^2+\cyc
=\overline{\xi_1X_1X_2^2}+2X_1X_2X_3\cdot\overline{\xi_1}
=\overline{\xi_1X_1X_2^2}\nn\\
\overline{X_1^2}
\=\overline{\xi_1X_1^2X_2^3}+\overline{\xi_1X_1^2X_2X_3^2}+\overline{\xi_1X_2^3}-X_1(X_2+X_3)+\cyc
=-2\:\overline{X_1X_2}
\:.\label{(X1 squared)}
\end{align}

\subsubsection{Proof of eq.\ (\ref{det Xi(3,0)V in terms of da and db})}
\label{Appendix: Proof of eq. det Xi(3,0)V in terms of da and db}

It remains to show eq.\ (\ref{det Xi(3,0)V in terms of da and db})
for general deformations $\xi_i=dX_i$, assuming that $\overline{X_1}=0$, eq.\ (\ref{eq: vanishing condition on X}).

Let $\alpha,\beta\in\Q$.
On the one hand,
\begin{align*}
\alpha\:a\:d b+\beta\:b\:d a
\=-\alpha\overline{X_1X_2}\cdot\overline{\xi_1X_2X_3}-\beta\:X_1X_2X_3\cdot\overline{\xi_1X_2}\\
\=-(\alpha+\beta)\:X_1X_2X_3\cdot\overline{\xi_1X_2}
-\overline{\xi_1X_1^2X_2^3}+\overline{\xi_1X_1^2X_2X_3^2}+\overline{\xi_1X_2^3}\alpha\:\overline{\xi_1X_2^2X_3^2}\:.
\end{align*}
On the other hand, 
\begin{align*}
\det\Xi_{3,0}\det V_3
\=\det
\begin{pmatrix}
\xi_1&\xi_2&\xi_3\\
X_1&X_2&X_3\\
1&1&1
\end{pmatrix} 
\begin{pmatrix}
1&X_1&X_1^2\\
1&X_2&X_2^2\\
1&X_3&X_3^2
\end{pmatrix}\nn\\
\=\det
\begin{pmatrix}
0&\overline{\xi_1X_1}&\overline{\xi_1X_1^2}\\
0&\overline{\xi_1X_1^2X_2^3}+\overline{\xi_1X_1^2X_2X_3^2}+\overline{\xi_1X_2^3}\overline{X_1^2}&\overline{X_1^3}\\
3&0&\overline{X_1^2}\\
\end{pmatrix}
=3
\left(\overline{X_1^3}\cdot\overline{\xi_1X_1}-\overline{X_1^2}\cdot\overline{\xi_1X_1^2}\right)\:.
\end{align*} 
Eqs (\ref{(xi1 X1)}), (\ref{(xi1 X1 X2) and (xi1 X2 X3)}), (\ref{(X1 to the cube)}), and (\ref{(X1 squared)}) yield
\begin{align*}
\det\Xi_{3,0}\det V_3
\=3
\left(-3X_1X_2X_3\cdot\overline{\xi_1X_2}+2\:\overline{X_1X_2}\cdot\overline{\xi_1X_2X_3}\right)\nn\\
\=3
\left(-3X_1X_2X_3\cdot\overline{\xi_1X_2}+2\:\overline{\xi_1X_2^2X_3^2}+2X_1X_2X_3\cdot\overline{\xi_1X_2}\right)\nn\\
\=-3X_1X_2X_3\cdot\overline{\xi_1X_2}+6\:\overline{\xi_1X_2^2X_3^2}\:.
\end{align*}
We conclude that $\alpha=-6$, $\alpha+\beta=3$, so $\beta=9$. 

\subsubsection{Proof of Eq.\ (\ref{det Xi(3,1)V in terms of da and db})}
\label{Appendix: Proof of eq. det Xi(3,1)V in terms of da and db}

It remains to show eq.\ (\ref{det Xi(3,1)V in terms of da and db}) for general deformations $\xi_i=dX_i$, 
assuming that $\overline{X_1}=0$, eq.\ (\ref{eq: vanishing condition on X}).

Let $\alpha,\beta\in\Q$.
On the one hand, we have by eq.\ (\ref{(X1 X2) squared})
\begin{displaymath}
\alpha a^2da+\beta b\:db
=\alpha\:\overline{X_1^2X_2^2}\cdot\overline{\xi_1X_2}+\beta X_1X_2X_3\cdot\overline{\xi_1X_2X_3}\:.
\end{displaymath}
On the other hand,
\begin{align*}
\det\Xi_{3,1}\det V_3
\=\det
\begin{pmatrix}
\xi_1X_1&\xi_2X_2&\xi_3X_3\\
X_1&X_2&X_3\\
1&1&1
\end{pmatrix} 
\begin{pmatrix}
1&X_1&X_1^2\\
1&X_2&X_2^2\\
1&X_3&X_3^2
\end{pmatrix}\nn\\
\=\det
\begin{pmatrix}
\overline{\xi_1X_1}&\overline{\xi_1X_1^2}&\overline{\xi_1X_1^3}\\
0&\overline{X_1^2}&\overline{X_1^3}\\
3&0&\overline{X_1^2}\\
\end{pmatrix}\\
\=3
\left(\overline{X_1^3}\cdot\overline{\xi_1X_1^2}-\overline{X_1^2}\cdot\overline{\xi_1X_1^3}\right)
+\left(\overline{X_1^2}\right)^2\cdot\overline{\xi_1X_1}\nn\:,
\end{align*} 
where by eq.\ (\ref{(X1 X2) squared}),
\begin{displaymath}
\left(\overline{X_1^2}\right)^2
=4\left(\overline{X_1X_2}\right)^2
=4\:\overline{X_1^2X_2^2}\:,
\end{displaymath}
and eqs (\ref{(xi1 X1)}), (\ref{(xi1 X1 X2) and (xi1 X2 X3)}), (\ref{(X1 to the cube)}), and (\ref{(X1 squared)}) apply.
Moreover,
\begin{align*}
\overline{X_1X_2}\cdot\overline{\xi_1X_1X_2^2} 
\=(X_1X_2+X_1X_3+X_2X_3)(\xi_1X_1X_2^2+\xi_1X_1X_3^2+\cyc)\\
\=X_1^2X_2^2\cdot\xi_1X_2+X_1^2X_3^2\cdot\xi_1X_3+\cyc\\
\+X_1X_2\cdot\xi_1X_1X_3^2+X_1X_3\cdot\xi_1X_1X_2^2+\cyc\\
\+X_2X_3\cdot(\xi_1X_1X_2^2+\xi_1X_1X_3^2)+\cyc\\
\=\overline{X_1^2X_2^2}\cdot\overline{\xi_1X_2}
+X_1X_2X_3\cdot\overline{\xi_1X_1X_2} 
+X_1X_2X_3\cdot\overline{\xi_1X_2^2}\\
\=\overline{X_1^2X_2^2}\cdot\overline{\xi_1X_2}\:,
\end{align*}
by eq.\ (\ref{(xi1 X1 X2) and (xi1 X2 X3)}) and
\begin{align*}
\overline{\xi_1X_2^2}
\=-\xi_1X_2(X_1+X_3)-\xi_1(X_1+X_2)X_3+\cyc\\
\=-\overline{\xi_1X_1X_2}-2\:\overline{\xi_1X_2X_3}
=\overline{\xi_1X_2X_3}\:. 
\end{align*}
We conclude that
\begin{align*}
\det\Xi_{3,1}\det V_3
\=9X_1X_2X_3\cdot\overline{\xi_1X_2X_3}
+6\overline{X_1X_2}\cdot\overline{\xi_1X_1X_2^2}
-4\overline{X_1^2X_2^2}\cdot\overline{\xi_1X_2}\nn\\
\=9X_1X_2X_3\cdot\overline{\xi_1X_2X_3}
+2\overline{X_1^2X_2^2}\cdot\overline{\xi_1X_2}\:,
\end{align*}
and so $\alpha=2$, $\beta=9$, as required.
This completes the proof.

\subsection{Completion of the proof of Theorem \ref{Theorem: system of differential equations in algebraic formulation}
(Section \ref{Subsection: variation of the conformal structure for g=1})}\label{Appendix Section: Completion of the proof of Theorem on DE in algebraic formulation}

It remains to show that
\begin{align*}
-\frac{\Theta(X_1)(\xi_2X_3+\xi_3X_2)}{(X_1-X_2)(X_3-X_1)}+\cyc
\=-\frac{2}{3}ca_2\1\frac{\det\Xi_{3,0}}{\det V_3}
-2\A \frac{\det\Xi_{3,1}}{\det V_3}\:.
\end{align*}
We have
\begin{align*}
\xi_2X_3+\xi_3X_2
\=(\xi_2+\xi_3)(X_2+X_3)-(\xi_2X_2+\xi_3X_3)\\
\=\xi_1X_1-(\xi_2X_2+\xi_3X_3)\\
\=2\xi_1X_1-\overline{\xi_1X_1}\:.
\end{align*}
It follows that
\begin{displaymath}
-\frac{\Theta(X_1)(\xi_2X_3+\xi_3X_2)}{(X_1-X_2)(X_3-X_1)}+\cyc
=\frac{8c\1\:\xi_1X_1^2-2\A \xi_1X_1}{(X_1-X_2)(X_3-X_1)}
+\cyc\:,
\end{displaymath}
since $\overline{\xi_1X_1}$ is symmetric and we have eqs (\ref{cyclic sum of 1 over Ni is zero}) and (\ref{cyclic sum of Xi over Ni is zero}). 
Moreover, we have eq.\ (\ref{eq for xi1X2X3 over N1}).
From eq.\ (\ref{X1 squared})  follows eq.\ (\ref{eq for xi1X2X3 over N1}),
and the proof of Theorem \ref{Theorem: system of differential equations in algebraic formulation} is complete.

\subsection{Proof of Proposition \ref{proposition: equivalence of two systems of ODE for g=1}}\label{Appendix Section: Proof of Proposition {proposition: equivalence of two systems of ODE for g=1}}

By eqs (\ref{def: vartheta}) and (\ref{eq: omega as difference of xi's over difference of Xi's}), we have
\begin{displaymath}
2\sum_{s=1}^3
\frac{\xi_s}{p'(X_s)}\langle\vartheta(X_s)\rangle
=2\sum_{s=1}^3
\frac{\xi_s}{p'(X_s)}\left(-cX_s\1+\frac{\A }{4}\right)
=-\frac{c}{6}\omega\1-\frac{\A }{8}\frac{\det\Xi_{3,0}}{\det V_3}\:,
\end{displaymath}
so the ODE for $\1$ in system (\ref{system: variation of zero-point function and <vartheta>})
is equivalent to that in system (\ref{system: variation of zero-point function and A1 in terms of quotients of determinants}).
We address the ODE for $\langle\vartheta(x)\rangle$.
For $x\rechts\infty$, we have eq.\ (\ref{incomplete eq. for <vartheta(x)vartheta(Xs)> for n=3})
which relates the ODE for $\langle\vartheta(x)\rangle$
with that for $\1$.
The resulting equation is necessarily compatible with the differential equation for $\1$
since both equations from system (\ref{system: variation of zero-point function and <vartheta>}) 
are derived from the same general formula in \cite[Lemma 6]{LN:2017}.
In particular, 
in the region where $x$ is large, 
it follows from step one that the differential equation for $\langle\vartheta(x)\rangle$  is equivalent 
to the differential equation for $\1$ in system (\ref{system: variation of zero-point function and A1 in terms of quotients of determinants}).
It remains to check the differential equation in the region where $x$ is small and the terms in $\langle\vartheta(x)\rangle$ that do not depend on $x$ dominate.
By the definition of $\langle\vartheta(x)\rangle$, we have 
\begin{displaymath}
\left(d-\frac{c}{8}\:\omega\right)\langle\vartheta(x)\rangle
=-cx\left(d+\frac{c}{24}\omega\right)\1
+\frac{1}{4}\left(d+\frac{c}{24}\omega\right)\A 
+x\frac{c^2}{6}\omega\1
-\frac{c}{24}\omega\A 
\:.
\end{displaymath} 
Using system (\ref{system: variation of zero-point function and A1 in terms of quotients of determinants}) 
and eqs (\ref{eq: quotient of determinants for Xi(3,k)}) and (\ref{eq: omega as difference of xi's over difference of Xi's}),
we obtain
\begin{equation}\label{eq: for comparison, the covariant derivative of <vartheta>}
\begin{split}
\left(d-\frac{c}{8}\:\omega\right)\langle\vartheta(x)\rangle
\=\left(2c^2x\1+\left(1-\frac{c}{2}\right)\A \right)\:\sum_{s=1}^3\frac{\xi_sX_s}{p'(X_s)}\\
\-\left(cx\frac{\A }{2}-\frac{ca}{15}\1\right)\:\sum_{s=1}^3\frac{\xi_s}{p'(X_s)}
\:.
\end{split}
\end{equation} 
On the other hand, in the $(2,5)$ minimal model, we have by the proof of Proposition \ref{proposition: 2-point function of vartheta}.\ref{proposition: <vartheta vartheta>r}
for large $x$ and for $s=1,2,3$,
\begin{displaymath}
\langle\vartheta(x)\vartheta(X_s)\rangle
=
-cx\langle\vartheta(X_s)\rangle
+\frac{c}{2}X_s^2\1
+\frac{8}{5}X_s\A 
+\frac{ca}{5}\1+O(x^{-1})
\:.
\end{displaymath}
We also note that by eqs (\ref{eq: quotient of determinants for Xi(3,2)}) and (\ref{eq: quotient of determinants for Xi(3,k)}),
\begin{displaymath}
\sum_{s=1}^3\frac{\xi_sX_s^2}{p'(X_s)}
=-\frac{a}{3}\:\sum_{s=1}^3\frac{\xi_s}{p'(X_s)}
\:.
\end{displaymath}
So the term
\begin{displaymath}
2\sum_{s=1}^3
\frac{\xi_s}{p'(X_s)}\langle\vartheta(X_s)\vartheta(x)\rangle 
\end{displaymath}
matches the r.h.s.\ of eq.\ (\ref{eq: for comparison, the covariant derivative of <vartheta>}).
Since $\overline{\xi_1}=0$, we have for $m\geq 1$,
\begin{align*}
\sum_{s=1}^3\frac{\xi_s}{(x-X_s)^m}
\=O(x^{-(m+1)})\:.
\end{align*}
Thus by eqs (\ref{dp/p}) and (\ref{d(p'/p)}), the expression 
\begin{displaymath}
-\langle\vartheta(x)\rangle\frac{dp}{p}|_x
-\frac{c}{16}p'\:d\left(\frac{p'}{p}\right)|_x
\1 
\end{displaymath}
is $O(x^{-1})$.
This shows that the ODE for $\A$ and the ODE for $\langle\vartheta(x)\rangle$ are equivalent up to $O(1)$ terms.                                                     
From the general discussion \cite{LN:2017} we know that the respective r.h.s.\ of either differential equation has the correct singularities, 
so the remaining $O(x^{-1})$ terms must be zero.

\end{document}